\definecolor{myurlcolor}{rgb}{0,0,0.4}
\definecolor{mycitecolor}{rgb}{0,0.5,0}
\definecolor{myrefcolor}{rgb}{0.5,0,0}
\newtheorem{remark}{Remark}
\newtheorem{proposition}{Proposition}
\newtheorem*{proof*}{Proof}
\newcommand{\be}{\begin{equation}}
\newcommand{\ee}{\end{equation}}
\newcommand{\bea}{\begin{eqnarray}}
\newcommand{\eea}{\end{eqnarray}}
\newcommand{\vsp}{\vspace{0.4cm}}
\newcommand{\grit}[1]{{\bfseries {\itshape {#1}}}}
\newcommand{\blue}[1]{\color{blue}{{#1}}}
\newcommand{\cfr}[1]{({\itshape cfr.} {#1})}
\newcommand{\ra}{\rightarrow}
\newcommand{\Lra}{\Longrightarrow}
\newcommand{\hh}{\mathcal{H}}
\newcommand{\bh}{\mathcal{B}(\mathcal{H})}
\newcommand{\bhsa}{\mathcal{B}_{sa}(\mathcal{H})}
\newcommand{\Glh}{\mathcal{GL}(\mathcal{H})}
\newcommand{\Uh}{\mathcal{U}(\mathcal{H})}
\newcommand{\uh}{\mathfrak{u}(\mathcal{H})}
\newcommand{\stsph}{\mathscr{S}(\mathcal{H})}
\newcommand{\Tr}{\textit{Tr}}
\newcommand{\posh}{\mathcal{P}(\mathcal{H})}
\newcommand{\stsp}{\mathscr{S}}
\newcommand{\gapp}{\mathscr{G}}
\newcommand{\pos}{\mathscr{P}}
\newcommand{\gr}{\mathrm{g}}
\title{Group actions and Monotone Quantum Metric Tensors}
\author{F. M. Ciaglia$^{1,3}$  \href{https://orcid.org/0000-0002-8987-1181}{\includegraphics[scale=0.7]{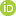}}, F. Di Nocera$^{2,4}$  \href{https://orcid.org/0000-0002-1415-2422}{\includegraphics[scale=0.7]{ORCID.png}}\\
\footnotesize{$^{1}$\textit{Universidad Carlos III de Madrid,  28903  Madrid, Spain}} \\
\footnotesize{$^{2}$\textit{Max Planck Institute for Mathematics in the Sciences,  04103  Leipzig, Germany}} \\
\footnotesize{$^{3}$\textit{ e-mail: \texttt{f.ciaglia[at]math.uc3m.es}} $^{4}$\textit{ e-mail: \texttt{dinocer[at]mis.mpg.de}}} \\
}
\begin{document}

\maketitle

\begin{abstract}
The interplay between actions of Lie groups   and monotone quantum metric tensors on the space of faithful quantum states of a finite-level system  observed recently in \href{https://doi.org/10.1140/epjp/s13360-020-00537-y}{DOI: 10.1140/epjp/s13360-020-00537-y} and \href{https://doi.org/10.1007/978-3-030-80209-7_17}{DOI: 10.1007/978-3-030-80209-7\textunderscore 17} is here further developed.
\end{abstract}

\tableofcontents
\thispagestyle{fancy}

\section{Introduction} \label{sec:introduction}

In the context of information geometry for finite-dimensional quantum systems, it is well-known that the canonical action $\rho\mapsto U\rho U^{\dagger}$ of the unitary group $\Uh$ on the manifold $\stsp(\hh)$ of faithful quantum states  provides symmetry transformations for every monotone quantum metric tensor on $\stsp(\hh)$ pertaining to Petz's classification \cite{Petz-1996}.
Therefore, the fundamental vector fields generating the canonical action of $\Uh$ are Killing vector fields for every quantum monotone metric tensor.

It is also known that the canonical action of $\Uh$ on $\stsp(\hh)$ can be seen as the restriction to $\Uh$ of a nonlinear action of the general linear group $\Glh$ given by
\begin{equation}  \label{eqn:action of general linear group on states BH}
	\rho\mapsto \beta(\gr,\rho)= \frac{\gr \, \rho \, \gr^{\dagger}}{\mathrm{Tr}(\gr \, \rho \, \gr^{\dagger})}.
\end{equation}
The action $\beta$ is transitive on $\stsp$ and turns it into a homogeneous manifold \cite{C-DC-I-L-M-2017,C-J-S-2020,G-K-M-2005,G-K-M-2006}.
Therefore, the fundamental vector fields of the canonical action of $\Uh$ form a Lie-subalgebra of the algebra of fundamental vector fields of the action of $\Glh$.

In \cite{Ciaglia-2020}, it is showed that, in order to describe the fundamental vector fields of $\beta$, it is sufficient to consider the fundamental vector fields of the canonical action of $\Uh$ on $\stsp(\hh)$ together with  the gradient vector fields associated with the expectation-value functions $l_{a}(\rho)=Tr(a\rho)$ - where $\mathbf{a}$ is any self-adjoint element  in the space $\bh$ of bounded linear operators on $\hh$ - by means of the so-called Bures-Helstrom metric tensor \cite{B-Z-2006, Bures-1969,   Dittmann-1995, Helstrom-1967, Helstrom-1968, Uhlmann-1992}.
This instance provides an unexpected link between the the unitary group $\Uh$, the $\Glh$-homogeneous manifold structure of $\stsp(\hh)$, the Bures-Helstrom metric tensor, and the expectation value functions.

However, this is not the only example in which a monotone metric tensor “interacts” with the general linear group $\Glh$.
Indeed, again in \cite{Ciaglia-2020}, it is also showed that fundamental vector fields of the canonical action of $\Uh$ together with the gradient vector fields associated with the expectation value functions by means of the Wigner-Yanase metric tensor \cite{G-I-2001,G-I-2003,Hasegawa-1993,Hasegawa-1995,Hasegawa-2003,H-P-1997,Jencova-2003-2} close on a representation of the Lie algebra of $\Glh$ that integrates to a group action given by
\begin{equation} \label{eqn:action of general linear group on states WY}
\rho \mapsto  \beta^{WY} ( \gr, \rho) = \frac{ (\gr \sqrt{\rho} \gr^{\dagger})^2}{\Tr\left( \left( \gr \sqrt{\rho} \gr^{\dagger} \right)^2 \right)} \in \stsph.
\end{equation}
Of course, the action $\beta^{WY} $   is different from the action $\beta$, but it is still a transitive action so that $\stsp(\hh)$ is a homogeneous manifold also with respect to this action, and the underlying smooth structure coincides with the one related with $\beta$.
Moreover, a direct inspection shows that $\beta^{WY} $   can be thought of as a kind of deformation of $\beta$  by means of the square-root map and its inverse on positive operators.
This instance is better described and elaborated upon in the rest of the paper.

Finally, again in \cite{Ciaglia-2020}, it is proved that there is another Lie group “extending” the unitary group $\Uh$ and for which a construction similar to the one discussed above is possible.
This Lie group is the cotangent bundle $T^{*}\Uh$ of $\Uh$ endowed with its canonical Lie group structure \cite{A-G-M-M-1994,A-G-M-M-1998}.
In this case, the gradient vector fields of the expectation value functions are built using the Bogoliubov-Kubo-Mori metric tensor \cite{Naudts-2018,N-V-W-1975,Naudts-2021,Petz-1994,P-T-1993}, and the action is given by
\begin{equation}  \label{eqn:action of general linear group on states BKM}
	\rho\mapsto \gamma((\mathbf{U},\mathbf{a}),\rho)= \frac{ \mathrm{e}^{U \ln(\rho) U^\dagger + \mathbf{a}}}{\Tr( \mathrm{e}^{U\ln(\rho) U^\dagger + \mathbf{a}})},
\end{equation}
where $\mathbf{a}$ is a self-adjoint element which is identified with a cotangent vector at $U$.
Once again, we obtain a transitive action on $\stsp(\hh)$ associated with a homogenous manifold structure whose underlying smooth structure coincides with the two others smooth structures previously mentioned.

It is important to note that, when we restrict to the unitary group $\Uh$, all the group actions we considered reduce to the canonical action
\begin{equation} \label{eqn:definition_alpha_stsp}
( U, \rho) \mapsto \alpha(U,\rho) = U \rho U^* 
\end{equation}
of the unitary group whose importance in quantum theories is almost impossible to overestimate.

Once we have these three “isolated” instances, it is only natural to wonder if they are truly isolated cases, or if there are other monotone metric tensors for which a similar construction is possible.
In \cite{C-DN-2021} this problem is completely solved in the two-level case in which a direct, coordinate-based solution is possible.
The result is that the only two groups for which the aforementioned construction works are precisely the general linear group $\Glh$ and the cotangent group $T^{*}\Uh$.
Moreover, in the case of $T^{*}\Uh$, the only compatible action is the action $\gamma$ already described in \cite{Ciaglia-2020}, while, for $\Glh$, there is an entire family of compatible smooth actions parametrized by a real number $\kappa\in (0,1]$ and given by
\begin{equation} \label{eqn:family of actions of special linear group on states}
  \rho \mapsto \beta^{\kappa}(\gr, \rho) = \frac{ (\gr \rho^{\sqrt{\kappa}} \gr^\dagger)^{\frac{1}{\sqrt{\kappa}}}} {\Tr\left( \left( \gr \rho^{\sqrt{\kappa}} \gr^\dagger \right)^{ \frac{1}{\sqrt{\kappa}}} \right)}.
\end{equation}
All these actions are connected with a different   quantum  metric tensor.
For instance, when $\kappa=1$ the Bures-Helstrom metric tensor is recovered, while the Wigner-Yanase metric is recovered when $\kappa=1/4$.
Al the other cases correspond to Riemannian metric tensors on $\stsp(\hh)$ which are invariant under the standard action of $\Uh$.

In this work we further investigate the problem   by showing that all the group actions and metric tensors found in \cite{C-DN-2021} for a two-level system actually appear also for a quantum system with an arbitrary, albeit finite, number of levels.
Moreover, we characterize all the values of $\kappa$ for which the Riemannian metric tensor associated with the action $\beta^{\kappa}$ in equation \eqref{eqn:family of actions of special linear group on states} is actually a quantum monotone metric tensor \cfr{proposition \ref{prop: opmon}}.

The work is structured as follows.
In section  \ref{sec:geometry_of_the_space_of_quantum_states_basic_notions} we discuss those differential geometric properties of the manifold of normalized and un-normalized quantum states that are necessary to the proof of our main results.
In section \ref{sec: main results} we set up the problem and prove our main results for $\Glh$ and $T^{*}\Uh$, namely, proposition \ref{prop: GL(H) actions and monotone quantum metric tensors} and proposition \ref{prop: T*U(H) actions and monotone quantum metric tensors}.
In section \ref{sec: conclusions} we discuss our results and some  possible future  directions of investigation.

\section{Geometry of (un-normalized) quantum states} \label{sec:geometry_of_the_space_of_quantum_states_basic_notions}

In this section, the construction of the space of quantum states \cite{A-S-1999, B-Z-2006} is briefly described and some of its geometric features are recalled, this gives the setting for our discussion. Then we give a hint to the role played by group actions in the context of Quantum Mechanics and introduce some particular group actions that will be needed in order to get to the main result of this work. Finally, the concept of \emph{monotone metric}, which is crucial in the context of Quantum Information Geometry, is introduced.

In standard quantum mechanics \cite{Dirac-1958,von-Neumann-1955}, a quantum system is mathematically described with the aid of a complex Hilbert space $\hh$.
The bounded observables of the system are identified with the self-adjoint elements in the algebra  $\bh$ of bounded linear operators   $\hh$, and the set of all such elements is denote with $\mathcal{B}_{sa} (\mathcal{H})$.
The physical states of the system are identified with the so-called \grit{density operators} on $\hh$.
In order to define what a density operator is, we start recalling that  $\omega\in\bhsa$ is said to be positive semi-definite if
\begin{equation}\label{eqn: positive semidefinite operators}
	\bra{\psi} \omega \ket{\psi} \ge 0  \quad \forall  \psi  \in \hh ,
\end{equation}
and its customary to write  $\omega \geq 0$ and $\omega >0$ when it is also invertible.
The space of positive semi-definite operators is denoted by $\overline{ \mathcal{ P(H)}}$ so that
\begin{equation}
	\overline{ \mathcal{ P(H)}} = \{  \omega \in \bh| \, \, \omega \geq 0\} ,
\end{equation}
and its elements may be referred to as \grit{un-normalized quantum states} for reasons that are clarified below.
From a geometrical point of view, $\overline{ \mathcal{ P(H)}}$ is a convex cone.
A \emph{density operator} $\rho$ is just an element in  $\overline{ \mathcal{ P(H)}}$ satisfying the normalization condition $\Tr \rho = 1$.
This linear condition defines a hyperplane
\begin{equation}\label{eqn: trace-1 hyperplane} 
	\mathcal{B}^1_{sa} ( \hh) = \{ a \in \mathcal{ B}_{sa} ( \hh) | \,\, a = a^\dagger, \,\, \Tr \, a = 1   \}
\end{equation}
in $\mathcal{ B}_{sa} ( \hh)$.
As anticipated before, physical states are identified with density operators and thus the space of quantum states reads
\begin{equation}\label{eqn: space of quantum states}
	\overline{ \stsph} = \{ \rho \in \overline{ \mathcal{ P(H)}}, \, \, \Tr\, \rho = 1 \} ,
\end{equation}
and thus the nomenclature “un-normalized quantum states” for elements in $\overline{\mathcal{P}(\hh)}$ appears justified.
Clearly, $\overline{ \stsph}$ is  given by the intersection between the convex cone $\overline{ \mathcal{ P(H)}}$   and the hyperplane $\mathcal{ B}^1_{sa}$ and thus is a convex set.

\begin{remark}\label{rem: classical and quantum}
It is worth noting that there is a very deep analogy between the space $\overline{ \mathcal{ P(H)}}$ of positive semi-definite operators and the  space $\mathscr{M}_{\mu}(M) $ of classical measures on the measurable space $M$ which are absolutely continuous with respect to the reference measure $\mu$.
Indeed, the elements in these sets are defined in terms of a suitable positivity condition, and they also act on suitable algebras: positive semi-definite operators act on $\bh$ through $\mathrm{Tr}(\omega a)$, while measures on $M$ acts on the algebra $\mathcal{L}^{\infty}(M,\mu)$ through integration.
This analogy finds the perfect mathematical formalization  in the context of the theory of $C^{*}/W^{*}$-algebras \cite{Blackadar-2006,B-R-1987-1,Landsman-2017,Takesaki-2002}, where it turns out that both $\overline{ \mathcal{ P(H)}}$ and $\mathscr{M}_{\mu}(M) $ arise as the space of normal positive linear functionals  on  suitable  $C^{*}/W^{*}$-algebras: the algebra $\bh$  in the quantum case, and the algebra $\mathcal{L}^{\infty}(M,\mu)$ in the classical case.

Then, recalling that the space of quantum states $\overline{ \stsph}$ arise as a normalized version of $\overline{ \mathcal{ P(H)}}$ ({\itshape cfr.} equation \eqref{eqn: space of quantum states}), it is almost immediate to spot the analogy between  and the space  $\mathscr{P}_{\mu}(M) $ of probability measures on $M$ which are absolutely continuous with respect to the reference measure $\mu$.
Of course, this analogy too finds a mathematical formalization  in the context of the theory of $C^{*}/W^{*}$-algebras, where now both sets appear as the space of (functional analytic) states on the aforementioned algebras.

This parallel is being exploited to give a unified account of some aspects of classical and quantum information geometry   \cite{C-J-S-2020-02,C-J-S-2020,Naudts-2018,Naudts-2022,Naudts-2021}.
\end{remark}

In the rest of this work, as it is often done in the context of quantum information theory \cite{N-C-2011}, we restrict our attention to the finite-dimensional case in which $\hh$ has complex dimension $n<\infty$.
Then, it is proved that both $\overline{ \mathcal{ P(H)}}$ and $\overline{ \stsph}$ may be endowed with the structure of stratified manifold whose underlying topological structure coincides with the topology inherited from $\bh$ \cite{DA-F-2021}.
It turns out that the  \emph{strata} of these stratified manifolds can be described in terms of a particular action of the general linear group $\Glh$ \cite{C-DC-I-L-M-2017,C-I-J-M-2019,C-J-S-2020,G-K-M-2005,G-K-M-2006}.
Specifically, $\Glh$ acts on the whole $\bh$ according to 
\be\label{eqn: linear action of GL(H)}
(\gr,a)\mapsto \hat{\beta}(\gr,a)=\gr\,a\,\gr^{\dagger} .
\ee
It is important to note that, when we restrict the action $\hat{\beta}$ in such a way that it acts only on positive elements and only by means of elements in $\Uh$, it reduces to the canonical action
\begin{equation} \label{eqn:definition_alpha}
	( U, \omega) \mapsto \hat{\alpha}(U,\omega) = U \omega U^*
\end{equation}
of the unitary group.
The action $\hat{\beta}$ is linear, and it is a matter of direct inspection to check that it preserves both $\bhsa$ and $\overline{ \mathcal{ P(H)}}$ and that the orbits through   $\overline{ \mathcal{ P(H)}}$  are made of positive semi-definite operators of the same rank, denoted by $   \mathcal{ P}^{k}(\hh) $ where $k\leq n $ is the rank.
These orbits thus become homogeneous manifold and their underlying smooth structures agree with those associated with the stratification of $\overline{ \mathcal{ P(H)}}$ \cite{DA-F-2021}.

In particular, we are interested in the  maximal stratum 
\be
\mathcal{ P}(\hh) \equiv    \mathcal{ P}^{n}(\hh) = \{ \omega\in \overline{ \mathcal{ P(H)}}, \, \, \omega >0 \},
\ee
i.e., the space of invertible elements in   $\overline{ \mathcal{ P(H)}}$, which  forms the open interior of  $\overline{ \mathcal{ P(H)}}$.
The tangent space $T_\omega \mathcal{ P(H)}$ of $\mathcal{P(H)}$ at $\omega \in \mathcal{ P(H)}$ is isomorphic to $\mathcal{ B}_{sa}( \hh)$, since $\mathcal{ P ( H)}$ is an open set in $\mathcal{ B}_{sa}( \hh)$.
Since $\mathcal{P}(\hh)$ is a homogeneous manifold, the tangent space at each point can be described in terms of the fundamental vector fields of the action $\hat{\beta}$ evaluated at a point $\omega \in \mathcal{ P(H)}$ \cite{A-M-R-2012}.
Recalling that the Lie algebra $\mathfrak{gl}(\hh)$ of $\Glh$ is essentially $\bh$ endowed with the standard commutator, a curve in the group $\Glh$ can be written as
\begin{equation} \label{eqn:curve_on_GLH}
	g(t) = e^{ \frac{1}{2} t ( \mathbf{a} - i \mathbf{b}) }, 
\end{equation}
with $\mathbf{ a}$ and $\mathbf{b}$ self-adjoint operators.
Therefore, the fundamental vector field associated with $ \mathbf{a} - i \mathbf{b}$ at the point $\omega$ reads
\begin{equation} \label{eqn:fundamental_vector_fields_beta}
	\hat{Z}_{\mathbf{ a} \, \mathbf{ b}} (\omega) = \frac{d}{d t} \beta( g(t), \omega) \big|_{t = 0} = [ \omega, \mathbf{b}] + \{ \omega, \mathbf{a}\} \equiv \hat{X}_\mathbf{ b} (\omega) + \hat{Y}_\mathbf{ a} (\omega),
\end{equation}
where we have used the notation
\begin{equation}\label{eqn: commutators and anticommutators}
	\begin{split} 	
		[  \mathbf{a}, \mathbf{b} ]  & = \frac{i}{2} ( \mathbf{a} \mathbf{b} - \mathbf{b} \mathbf{a}),\\
		\{ \mathbf{a}, \mathbf{b} \} & = \frac{1}{2} ( \mathbf{a} \mathbf{b} + \mathbf{b} \mathbf{a}),
	\end{split}
\end{equation}
and we have set
\begin{equation}
	\begin{split} \label{eqn:definition_X_and_Y_PH}
		\hat{X}_{\mathbf{ b}} ( \omega)  & := \hat{Z}_{\mathbf{ 0} \, \mathbf{ b}} (\omega) = [  \omega, \mathbf{b}  ],
	\end{split}
\end{equation}
and
\begin{equation}
	\begin{split} \label{eqn:definition_Y_PH}
		\hat{Y}_{\mathbf{ a}} ( \omega)  & := \hat{Z}_{\mathbf{ a} \, \mathbf{ 0}} (\omega) = \{ \omega, \mathbf{a}  \} .
	\end{split}
\end{equation}
As mentioned before, when we restrict $\hat{\beta}$  to $\Uh$ we obtain the canonical action $\hat{\alpha}$ of $\Uh$.
Therefore, since the Lie algebra $\uh$ is just the space of skew-adjoint elements in $\bh=\mathfrak{gl}(\hh)$, setting $\mathbf{a} = 0$ in equation \eqref{eqn:curve_on_GLH}, we immediately obtain that the fundamental vector fields of $\hat{\alpha}$ are recovered as the fundamental vector fields $\hat{X}_{\mathbf{b}} = \hat{Z}_{\mathbf{0} \, \mathbf{ b}}$ of the action $\hat{\beta}$.
Concerning the vector fields of the form $\hat{Y}_{a}$, taking $a=\mathbb{I}$, we obtain the vector field 
\be\label{eqn: dilation vector field}
\Delta(\omega):=\hat{Y}_{\mathbb{I}}(\omega)=\omega
\ee
which represents the infinitesimal generator of the Lie group $\mathbb{R}_{+}$ acting on $\mathcal{P}(\hh)$ by dilation.
However, in general,  it turns out that $[\hat{Y}_{a},\hat{Y}_{b}]= \hat{X}_{[b,a]}$ \cite{C-DC-I-L-M-2017,C-J-S-2020} so that they do not form a Lie subalgebra.

Besides $\Glh$, also the cotangent Lie group $T^{*}\Uh$ acts on $\mathcal{P}(\hh)$ in such a way that the latter becomes a homogeneous manifold of $T^{*}\Uh$.
Specifically, the action is given by
\begin{equation} \label{eqn: BKM-action on plf}
( (U, \mathbf{a}), \omega) \mapsto \hat{\gamma} (( U, \mathbf{ a}), \rho) =  \mathrm{e}^{U \log \rho \, U^\dagger + \mathbf{a}} , 
\end{equation}
where we used the canonical identifications $T^{*}\Uh\cong \Uh \times\uh^{*}\cong \Uh\times \bhsa$.
It is not hard to check that, if we restrict to $\Uh$ by considering only elements of the type $(U,0)$, the action $\hat{\gamma}$ reduces to the action $\hat{\alpha}$ of $\Uh$ on $\mathcal{P}(\hh)$.
The action $\hat{\gamma}$ is smooth with respect to the previously mentioned smooth structure on $\mathcal{P}(\hh)$ associated with the action $\hat{\beta}$, and it is a transitive action.
Therefore, we conclude that the smooth structure underlying $\mathcal{P}(\hh)$ when thought of as a homogeneous manifold for $T^{*}\Uh$ coincides with the smooth structure underlying $\mathcal{P}(\hh)$ thought of as a homogeneous manifold for $\Glh$.
The action $\hat{\gamma}$ is basically related with the isomorphism $\Phi\colon\mathcal{P}(\hh)\ra \bhsa$ given by $\omega\ra\Phi(\omega):=\ln(\omega)$ and its inverse $\Phi^{-1}\colon\bhsa\ra\mathcal{P}(\hh) $ given by $a\ra\Phi^{-1}(x):=\mathrm{e}^{x}$.
Indeed, its clear that $T^{*}\Uh\cong \Uh \times\uh^{*}\cong \Uh\times \bhsa$ acts on $\bhsa$ through
\be
x\mapsto \zeta((U,\mathbf{a}),x)=U\mathbf{x}U^{\dagger} + \mathbf{a},
\ee
and it is a matter of direct inspection to show that 
\be 
\hat{\gamma}_{(U,\mathbf{a})}=\Phi^{-1}\circ \zeta_{(U,\mathbf{a})}\circ\Phi .
\ee
Thinking to $\Uh$ as a subgroup of the rotation group of the vector space $\bhsa$, it follows that the action $\zeta$ coincides with the restriction to $T^{*}\Uh$ of the standard action of the affine group on $\bhsa$.
The action $\hat{\gamma}$ can not be extended to the whole $\overline{\mathcal{P}(\hh)}$ essentially because $\Phi$ and its inverse can not be extended.
Concerning the fundamental vector fields $\hat{W}_{\mathbf{ a} \, \mathbf{ b}}$ of $\hat{\gamma}$, we have that $\hat{W}_{\mathbf{ 0} \, \mathbf{ b}} =\hat{X}_{\mathbf{b}}$ as in equation \eqref{eqn:definition_X_and_Y_PH}, while
\be\label{eqn: gradient vector fields of BKM on P(H)}
\hat{W}_{\mathbf{a}\,\mathbf{0}}(\omega)\,=\,\frac{\mathrm{d}}{\mathrm{d}t}\left(\mathrm{e}^{ \ln(\omega) + t\mathbf{a}}\right)_{t=0}= \int_{0}^{1}\,\mathrm{d}\lambda\,\left(\omega^{\lambda}\,\mathbf{a} \,\omega^{1-\lambda}\right)\,,
\ee
where we used the well-known equality 
\be\label{eqn: derivative of exponential of operator}
\frac{\mathrm{d}}{\mathrm{d}t}\, \mathrm{e}^{A(t)} \,=\,\int_{0}^{1}\,\mathrm{d}\lambda\,\left(\mathrm{e}^{\lambda\,A(t)}\,\frac{\mathrm{d}}{\mathrm{d}t}(A(t))\,\mathrm{e}^{(1-\lambda)A(t)}\right)\,,
\ee
which is valid for every smooth curve $A(t)$ inside $\bh$ (remember that the canonical immersion of $\posh$ inside $\bh$ is smooth) \cite{Suzuki-1997}.
 

\vsp

We turn now our attention to faithful quantum states.
The action in equation \eqref{eqn: linear action of GL(H)}  does not preserve the hyperplane $\mathcal{B}^1_{sa} ( \hh)$ in equation \eqref{eqn: trace-1 hyperplane}, and thus it also does not preserve $\overline{\stsph}$.
However, as already anticipated in equation \eqref{eqn:action of general linear group on states BH}, it is possible to suitably renormalize $\hat{\beta}$ to obtain the action 
\begin{equation}\label{eqn:action of general linear group on states BH}
	(\gr,\rho)\mapsto \beta(\gr,\rho):= \frac{\gr \, \rho \, \gr^{\dagger}}{\mathrm{Tr}(\gr \, \rho \, \gr^{\dagger})}.
\end{equation}
The normalization is recovered at the expense of the linearity/convexity of the action.
However, when we  restrict to the unitary group $\Uh$, the action $\beta$ reduces to the canonical action
\begin{equation} \label{eqn:definition_alpha_stsp}
 ( U, \rho) \mapsto \alpha ( U, \rho) = U \rho U^*  
\end{equation}
of the unitary group on the space of states which does preserve convexity.
Analogously to what happens for the action $\hat{\beta}$ on $\overline{\mathcal{P}(\hh)}$, the orbits of $\beta$ are made up of quantum states with the same fixed rank, and any such orbit is denoted as $\stsp^k(\mathcal{H})$ where $k$ is the rank.
These orbits thus become homogeneous manifold and their underlying smooth structures agree with those associated with the stratification of $\overline{\stsph}$ \cite{DA-F-2021}.
Moreover, each manifold $\stsp^k(\mathcal{H})$ can be seen as a submanifold of $\mathcal{P}^{k}(\hh)$ singled out by the intersection with the affine hyperplane $\mathcal{ B}^1_{sa}( \hh)$.
It is worth mentioning that the partition of  $\overline{\stsph}$ in terms of manifold of quantum states of fixed rank was also exploited in \cite{A-N-2000,Fujiwara-1999}, however, as far as the authors know, the homogeneous manifold structures was firstly understood in \cite{G-K-M-2005,G-K-M-2006} and the stratified structure in \cite{DA-F-2021}.

\begin{remark} \label{rem:analogy_simplex_space_of_quantum_states}
Building on remark \ref{rem: classical and quantum}, for a reader familiar with Classical Information Geometry it may be useful to think of the space of quantum states of an $n$-level quantum system as the quantum analogue of the $(n-1)$-simplex, with the \emph{strata} of the space of quantum states taking the place of the \emph{faces} of the simplex. 
A thorough discussion of this analogy can be found in \cite{C-DC-DN-V-2022,C-DC-L-M-M-V-V-2018,M-M-V-V-2017}.
\end{remark}

In particular, we focus on the stratum of maximal rank, i.e. invertible, or \emph{faithful} states  
\be
\stsp^n(\mathcal{ H})\equiv\stsph = \{ \rho \in \mathcal{P}(\hh)\,| \, \,   \Tr\, \rho = 1 \} .
\ee 
The tangent space $T_\rho \stsph$ of $\stsp(\hh)$ at $\rho$ is given by self-adjoint operators  with the additional property of being traceless, i.e., we have
\be\label{eqn: tangent space at a state}
	T_\rho \stsph \cong \mathcal{B}^0_{sa} ( \hh) = \{ a \in \mathcal{ B}_{sa} ( \hh) | \,\, a = a^\dagger, \,\, \Tr \, a = 0   \}.
\ee
Since $\stsph$ is a homogeneous manifold, the tangent space can be described using the fundamental vector fields of the action $\beta$ following what is done for   $\hat{\beta}$.
The fundamental vector fields of the action $\beta $ evaluated at a point $\rho \in \stsph$ are given by
\begin{equation}
	Z_{\mathbf{ a} \, \mathbf{ b}} (\rho) = \frac{d}{d t} \beta( g(t), \rho) \big|_{t = 0} = [ \rho, \mathbf{b}] + \{ \rho, \mathbf{a}\} - \rho \Tr \left( \{ \rho, \mathbf{a}\} \right) \equiv X_\mathbf{ b} (\rho) + Y_\mathbf{ a} (\rho).
\end{equation}
Where $g(t)$ is defined as in \eqref{eqn:curve_on_GLH} and now we have set
\begin{equation} \label{eqn:definition_X_and_Y_stsp}
	\begin{split} 
		X_{\mathbf{ b}} ( \rho)  & := Z_{\mathbf{ 0} \, \mathbf{ b}} (\rho) = [ \rho, \mathbf{b}],
	\end{split}
\end{equation} 
and
\begin{equation} \label{eqn:definition_Y_stsp}
	\begin{split} 
		Y_{\mathbf{ a}} ( \rho)  & := Z_{\mathbf{ a} \, \mathbf{ 0}} (\rho) = \{ \rho, \mathbf{a} \} - \rho \Tr \left( \{ \rho, \mathbf{a}\} \right) .
	\end{split}
\end{equation}
Again in analogy with what happens on $\mathcal{P(H)}$, the fundamental vector fields of the action $\alpha$ of $\Uh$ are identified with the  vector fields $X_{\mathbf{ b}}$. 

As anticipated in the introduction, the cotangent Lie group $T^{*}\Uh$ also acts on $\stsph$ through the action $\gamma$ given in equation \eqref{eqn:action of general linear group on states BKM}.
This action is smooth with respect to the previously mentioned smooth structure on $\stsph$ associated with the action $\beta$, and it is a transitive action.
Therefore, we conclude that the smooth structure underlying $\stsph$ when thought of as a homogeneous manifold for $T^{*}\Uh$ coincides with the smooth structure underlying $\stsph$ thought of as a homogeneous manifold for $\Glh$.
Moreover, when restricting to $\Uh$, a direct computation shows that the action $\gamma$ reduces to the standard action $\alpha$ of $\Uh$ on $\stsph$.
%
The fundamental vector fields $W_{\mathbf{a}\,\mathbf{b}}$ of $\gamma$ are then easily found.
In particular, $W_{\mathbf{0}\,\mathbf{b}}=X_{\mathbf{b}}$ as in equation \eqref{eqn:definition_X_and_Y_stsp}, and  
\be\label{eqn: gradient vector fields of BKM on S(H)}
W_{\mathbf{a}\,\mathbf{0}} (\rho)\,=\,\frac{\mathrm{d}}{\mathrm{d}t}\left(\frac{\mathrm{e}^{ \ln(\rho)  + t\mathbf{a}}}{\Tr\left(\mathrm{e}^{ \ln(\rho)  + t\mathbf{a}}\right)}\right)_{t=0}=\int_{0}^{1}\,\mathrm{d}\lambda\,\left(\rho^{\lambda}\,\mathbf{a} \,\rho^{1-\lambda}\right) - \Tr(\rho\,\mathbf{a})\,\rho\,,
\ee
where we again exploited equation \eqref{eqn: derivative of exponential of operator} (remember that the canonical immersion of $\stsph$ inside $\bh$ is smooth).
Concerning the vector fields in equation \eqref{eqn: gradient vector fields of BKM on S(H)}, it is worth mentioning that they already appeared in \cite{Petz-1994} in connection with the Bogoliubov-Kubo-Mori metric tensor, and then in the recent work \cite{A-L-2021} where the finite transformations they induce are exploited in the definition of a Hilbert space structure on $\stsph$ which is the quantum counterpart of a classical structure relevant in estimation theory.
However, as far as the author know, the group-theoretical aspects relating the vector fields in equation \eqref{eqn: gradient vector fields of BKM on S(H)} with the action $\gamma$ of $T^{*}\Uh$ were first investigated in \cite{Ciaglia-2020}.

Despite the lack of a universally recognized physical interpretation for un-normalized quantum states in $\overline{\mathcal{P}(\hh)}$, it turns out that they provide a more flexible environment in which to perform the mathematics needed to prove the main result of this work.
Intuitively speaking, it is already clear from the very definition of the actions $\hat{\beta}$ and $\beta$  that imposing the linear normalization constraint needed to pass to (normalized) quantum states leads to emergence of non-linear aspects which destroy the inherent convexity of the space of quantum states.
Moreover, following the ideology expressed in \cite{A-J-L-S-2017}, it can also be argued that the choice of a normalization has a somewhat arbitrary flavour that does not really encodes physical information because basically nothing really serious happens if we decide to normalize to $\pi^{2}$ rather than to $1$.
Accordingly, the choice to work with un-normalized states first and then appropriately “project the results” to normalized quantum states appears to us as not entirely unphysical.

Following this line of thought, we will always work on $\mathcal{P}(\hh)$ making sure that all the structure and results may be appropriately “projected” to $\stsph$.
At this purpose, it is relevant to introduce a projection map from $\pi\colon\mathcal{P} ( \mathcal{H})\ra\stsph$ as
\begin{equation}\label{eqn: projection map}
  \omega \mapsto \pi(\omega) = \frac{\omega}{\Tr( \omega)} ,
\end{equation}
and an associated section given by the  natural immersion map $i\colon\stsph \ra\mathcal{ P ( H)}$ reading
\be\label{eqn: immersion map}
\rho\mapsto i(\rho)=\rho .
\ee 
It is not hard to show that $i$ is an embedding, while $\pi$ is a surjective submersion.
Moreover, it is also possible to “extend” these maps to the whole $\overline{\mathcal{P}(\hh)}$ and $\overline{\stsph}$ in the obvious way, thus obtaining a continuous projection map and a continuous immersion map that preserve the stratification of $\overline{\mathcal{P}(\hh)}$ and $\overline{\stsph}$ and are smooth on each strata.

As mentioned before, bounded physical \emph{observables} are described by means of self-adjoint operators in $\bhsa$. 
Then, to any observable $\mathbf{a}$, it is possible to associate a smooth function $\hat{l}_\mathbf{a} ( \omega): \mathcal{ P ( H)} \ra\mathbb{R}$ given by
\begin{equation}\label{eqn:expectation_value_functions_cone}
	  \hat{l}_\mathbf{a} ( \omega) := \Tr{\, (\omega \, \mathbf{a})} ,
\end{equation}
this is referred to as \emph{expectation value function} of the observable $\mathbf{a}$. 
Of course, expectation value functions can also be defined  on the space of quantum states $\stsph$ setting
\begin{equation}\label{eqn:expectation_value_functions_stsp}
	l_\mathbf{a} ( \rho) := \Tr{\, (\rho \, \mathbf{a})} \in \mathbb{R}, 
\end{equation}
and it turns out that $\hat{l}_{a}$ is connected to $l_{a}$ by means of the pull-back with respect to $i$, i.e., it holds
\begin{equation}
	l_\mathbf{a} = i_{\stsp}^* \hat{l}_\mathbf{a}.
\end{equation}
By relaxing smoothness to continuity, it is possible to extend the expectation value functions to the whole $\overline{\mathcal{P}(\hh)}$ and the whole $\overline{\stsph}$.

It is a matter of direct calculation using the very definition of fundamental vector fields for both $\hat{\alpha}$ and $\alpha$ \cfr{equation \eqref{eqn:definition_alpha}  and equation \eqref{eqn:definition_alpha_stsp}} to show that
\be\label{eqn: fundamental vector fields of U(H) on expectation value functions}
\begin{split}
\mathcal{L}_{\hat{X}_{\mathbf{b}}}\hat{l}_{\mathbf{a}}=\hat{l}_{[\mathbf{b},\mathbf{a}]} \\
\mathcal{L}_{ X_{\mathbf{b}}}l_{\mathbf{a}}=l_{[\mathbf{b},\mathbf{a}]},
\end{split}
\ee
where $[\cdot,\cdot]$ is as in equation \eqref{eqn: commutators and anticommutators}.

By direct computation, it is possible to spot an interesting intertwine between the maps $\pi$ and $i$ and the actions $\hat{\beta}$ and $\beta$ given by
\begin{equation} \label{eqn: intertwine between actions GL(H)}
	\beta  = \pi \circ \hat{\beta}  \circ (\mathrm{Id}_{\Glh}\times i) 
\end{equation}
where $\mathrm{Id}_{\Glh}$ is the identity map on $\Glh$.
Analogously, we obtain
\begin{equation} \label{eqn: intertwine between actions BKM}
	\gamma  = \pi \circ \hat{\gamma}  \circ (\mathrm{Id}_{T^{*}\Uh}\times i) 
\end{equation}
where $\mathrm{Id}_{T^{*}\Uh}$ is the identity map on $T^{*}\Uh$.
Equation \eqref{eqn: intertwine between actions GL(H)} and equation \eqref{eqn: intertwine between actions BKM} explain in which sense $\beta $ and $\gamma$ are a kind of normalized version of the actions $\hat{\beta}$ and $\hat{\gamma}$ respectively.
The immersion map $i$ also allows us to obtain a pointwise relation between the fundamental vector fields $\hat{X}_{\mathbf{ b}}$ and $X_{\mathbf{ b}}$ and between the fundamental vector fields $\hat{Y}_{\mathbf{ a}}$ and $Y_{\mathbf{ a}}$ in terms of the tangent map $T_{\rho}i$ to $i$ at $\rho$.
Indeed,  from equation \eqref{eqn:definition_X_and_Y_PH}, equation \eqref{eqn:definition_Y_PH}, equation \eqref{eqn: dilation vector field}, equation \eqref{eqn:definition_X_and_Y_stsp}, and equation \eqref{eqn:definition_Y_stsp}, it follows that 
\be 
\begin{split}
T_\rho i \left( X_\mathbf{ b} ( \rho) \right) & =	\hat{X}_\mathbf{ b} ( i (\rho)) \\
T_{\rho}i \left( Y_{\mathbf{ a}} (\rho)\right)&=\hat{Y}_{\mathbf{a}} \left( i(\rho) \right) -  \Tr \left(\rho\, \mathbf{a}  \right)\, \Delta(i(\rho)) .
\end{split}
\end{equation}
Accordingly, we conclude that $X_{\mathbf{b}}$ is $i$-related with $\hat{X}_{\mathbf{b}}$ while $Y_{\mathbf{a}}$ is $i$-related with $\hat{Y}_{\mathbf{a}} - \hat{l}_{\mathbf{a}}\Delta$.
Analogously, from equation \ref{eqn: gradient vector fields of BKM on P(H)}, equation \eqref{eqn: dilation vector field}, and equation \ref{eqn: gradient vector fields of BKM on S(H)}, it follows
that
\be
T_{\rho}i(W_{\mathbf{a}\,\mathbf{0}} (\rho))= \hat{W}_{\mathbf{a}\,\mathbf{0}} (\rho)-  \Tr \left(\rho \, \mathbf{a}  \right)\, \Delta(i(\rho)) ,
\ee
which means that $W_{\mathbf{a}\,\mathbf{0}} $ is $i$-related with $\hat{W}_{\mathbf{a}\,\mathbf{0}}  - \hat{l}_{\mathbf{a}}\Delta$.

\subsection{Quantum monotone metric tensors} \label{subsec:geometry_of_the_space_of_quantum_states_information_metrics}

In the classical case, the Riemannian aspects of most of the manifolds of probability employed in statistics, inference theory, information theory, and information geometry are essentially encoded in a single metric tensor\footnote{We are here deliberately “ignoring” all those Wasserstein-type metric tensors simply because their very definition depends on the existence of additional structures on the sample space.}, namely, the Fisher-Rao metric tensor \cite{Fisher-1922,Mahalanobis-1936,Rao-1945}.
In the case of finite sample spaces, Cencov's pioneering work \cite{Cencov-1982} investigated the Fisher-Rao metric tensor from a  category-theoretic perspective and uncovered the uniqueness of this metric tensor when some invariance conditions are required.
Specifically, let $\overline{\mathsf{S}_{n}}$ denote the n-dimensional simplex in $\mathbb{R}^{n}$, i.e., the space of probability distributions on a discrete sample space with n elements, and let $\mathsf{S}_{n}$ denote the interior of $\overline{\mathsf{S}_{n}}$, the space of probability distributions with full support.
Note that $\mathsf{S}_{n}$ is a smooth, $(n-1)$-dimensional manifold while $\overline{\mathsf{S}_{n}}$ is a smooth manifold with corners.

A linear map $F\colon\mathbb{R}^{n}\ra\mathbb{R}^{m}$ is called a \grit{Markov morphism} if $F(\overline{\mathsf{S}_{n}})\subset .\overline{\mathsf{S}_{m}}$, and a Markov morphism $F$ is called a \grit{congruent embedding} if $F( \mathsf{S}_{n} )$ is diffeomorphic to $ \mathsf{S}_{n}$.
Congruent embeddings where studied by Cencov who characterized the most general form of these maps \cfr{\cite{Campbell-1986} for yet another characterization of congruent embeddings}.

According to Cencov, the relevant geometrical structures on $\mathsf{S}_{n}$ must all be left unchanged when suitably acted upon by congruent embeddings.
For instance, setting $\mathbb{N}_{>1}=\mathbb{N}\setminus \{\{0\},\{1\}\}$, a family $\{g^{n}\}_{n\in\mathbb{N}_{>1}}$  with $g^{n}$ a smooth Riemannian metric tensor on $\mathsf{S}_{n}$ is called \grit{invariant} if $F^{*}g^{m}=g^{n}$ for every  congruent embedding $F\colon\mathbb{R}^{n}\ra\mathbb{R}^{m}$.
Cencov's incredible result was to show that, up to an overall multiplicative positive constant, there is only one invariant family of Riemannian metric tensor for which $g^{n}$ coincides with the Fisher-Rao metric tensor.
Then, much effort has been devoted to extend Cencov's uniqueness result from the case of finite sample spaces to the case of continuous sample spaces leading, for instance, to a formulation on smooth manifolds \cite{B-B-M-2016} and a very general formulation valid for very general parametric models \cite{A-J-L-S-2017}.

Finally, it is also worth mentioning Campbell's investigation of Cencov's result when the normalization condition on probability distribution is lifted  \cite{Campbell-1986}.
He worked with finite measures with full support on a discrete sample space with $n$ elements and was able to prove that, despite Cencov's uniqueness result is lost,  the non-uniqueness simply amounts to the freedom of vary two smooth functions depending only on the total mass of the measure.

\vsp

As already hinted at in remark \ref{rem: classical and quantum} and in remark \ref{rem:analogy_simplex_space_of_quantum_states}, the manifold $\stsph$ may be thought of as the quantum analogue of $\mathsf{S}_{n}$  in the case of finite-level quantum systems.
Then, the quantum analogue of a Markov morphism is a completely-positive and trace-preserving linear (CPTP) map $F\colon\bh\ra\mathcal{B}(\mathcal{K})$ \cfr{\cite{Choi-1975} for the precise definition of CPTP maps and  \cite{Holevo-2001,Holevo-2011} for their role in quantum information}.
Quite trivially, a \grit{quantum congruent embedding} could be defined as a CPTP map  $F\colon\bh\ra\mathcal{B}(\mathcal{K})$ such that $F(\stsph)$ is diffeomorphic to $\stsph$.
A typical example of quantum congruent embedding is given by $\alpha_{\mathbf{U}}(\mathbf{a})=\alpha(\mathbf{U},\mathbf{a})=\mathbf{U}\mathbf{a}\mathbf{U}^{\dagger}$.
As far as the authors know, there seems to be no general characterization of these maps at the moment as there is in the classical case.

Inspired by Cencov's work, Petz investigated the following problem: characterize the families $\{g^{n}\}_{n\in\mathbb{N}_{>1}}$  with $g^{n}$ a smooth Riemannian metric tensor on $\stsp(\mathbb{C}^{n})$ satisfying the monotonicity property 
\be\label{eqn: monotonicity property riemannian metric tensors}
(g^{n})_{\rho}(v_{\rho},v_{\rho})\geq (g^{m})_{F(\rho)}(T_{\rho}(v_{\rho}),T_{\rho}(v_{\rho}))
\ee
for every CPTP map $F\colon\mathcal{B}(\mathbb{C}^{n})\ra\mathcal{B}(\mathbb{C}^{m})$ and for all $\rho\in\stsp(\mathbb{C}^{n})$.
He was able to prove \cite{Petz-1996} that, up to an overall multiplicative positive constant, these families of \grit{monotone quantum metric tensors} are completely characterized by operator monotone functions $f\colon \mathbb{R}^{+}\ra\mathbb{R}$  \cite{Loewner-1934} satisfying 
\be\label{eqn: monotone function}
\begin{split}
f(t) & = t f( t^{-1}), \\
f(1) & = 1.
\end{split}
\ee
In particular, if $\{G^{n}_{f}\}_{n\in\mathbb{N}_{>1}}$  is a family of monotone metric tensors then
\begin{equation} \label{eqn:Petz_characterization 0} 
	\left( G^{n}_{f} \right)_\rho ( \mathbf{v}_{\rho}, \mathbf{w}_{\rho}) = \kappa \Tr_{n} \left(  \mathbf{v}_{\rho} \left( \mathbf{ K}^f_\rho \right)^{-1} ( \mathbf{w}_{\rho})    \right)
\end{equation}
where $\kappa>0$ is a constant, $\mathbf{v}_{\rho},\mathbf{w}_{\rho}$ are vectors in $T_{\rho} \stsp(\mathbb{C}^{n}) \cong \mathcal{B}^0_{sa}(  \mathbb{C}^{n} ) $,  $\mathbf{K}^f_\rho$ is a superoperator on $\mathcal{B}(  \mathbb{C}^{n} )$ given by
\begin{equation}\label{eqn:superoperator_K_expression}
	\mathbf{K}^f_{\rho} = f(\mathbf{L}_\rho \mathbf{R}^{-1}_{\rho}) \mathbf{ R}_{\rho} 
\end{equation}
with $f$ the \emph{operator monotone function} mentioned before, and $\mathbf{L}_{\rho}$ and $\mathbf{R}_{\rho}$ are two linear superoperators on $\mathcal{B}(  \mathbb{C}^{n} )$  whose action is given by the left and right multiplication by  $\rho$.

We briefly mention a recent development toward the use of non-monotone metric tensors in quantum information theory \cite{Suzuki-2021}.

Since every $n$-dimensional complex Hilbert space $\hh$ is isomorphic to $\mathbb{C}^{n}$, we can almost immediately generalize equation \eqref{eqn:Petz_characterization 0}  to define a quantum monotone metric tensor $G_{f}^{\hh}$ on $\stsph$ setting
\begin{equation} \label{eqn:Petz_characterization} 
	\left( G^{\hh}_{f} \right)_\rho ( \mathbf{v}_{\rho}, \mathbf{w}_{\rho}) = \kappa \Tr_{\hh} \left(  \mathbf{v}_{\rho} \left( \mathbf{ K}^f_\rho \right)^{-1} ( \mathbf{w}_{\rho}B)    \right).
\end{equation}
In the following, for the sake of notational simplicity, we often simply write $G_{f}$  instead of $G_{f}^{\hh}$ because the Hilbert space $\hh$ is already clear from the context.

If we introduce the operators $\mathbf{e}^{\rho}_{lm}$ diagonalizing $\rho\in\stsph$, that is, such that
\be
\rho=\sum_{j=1}^{n}\,p_{j}^{\rho}\,\mathbf{e}_{jj}^{\rho},
\ee
we can also introduce the super-operators  $E_{kj}^{\rho}$ acting on $\bh$ according to
\be\label{eqn: rho super eigenprojectors}
E_{kj}^{\rho}\left( \mathbf{e}^{\rho}_{lm}\right)\,=\,\delta_{jl}\,\delta_{km}\mathbf{e}_{jk}^{\rho},
\ee
and it is then a matter of straightforward computation to check that
\be\label{eqn: Petz superoperator}
K^{f}_{\rho}=\sum_{j,k=1}^{n}\,p_{k}^{\rho}\,f\left(\frac{p_{j}^{\rho}}{p_{k}^{\rho}}\right)\,E_{kj}^{\rho}
\ee
where $p_{1}^{\rho},...,p_{n}^{\rho}$ are the eigenvalues of $\rho$.
Now, whenever $[\mathbf{w}_{\rho},\rho]=0$, from equation \eqref{eqn:Petz_characterization}  and equation \eqref{eqn: Petz superoperator} it follows that 
\be
(G_{f})_{\rho}(\mathbf{v}_{\rho},\mathbf{w}_{\rho})=\sum_{j=1}^{n}\frac{v_{\rho}^{jj} w_{\rho}^{jj}}{p_{j}^{\rho}},
\ee
where $v_{\rho}^{jj}$ and $w_{\rho}^{jj}$ are the diagonal elements of $\mathbf{v}_{\rho}$ and $\mathbf{w}_{\rho}$ with respect to the basis of eigenvectors of $\rho$.
It is relevant to note then that in this case we have
\be\label{eqn: Petz and F-R}
(G_{f})_{\rho}(\mathbf{v}_{\rho},\mathbf{w}_{\rho})\,=\,(G_{FR})_{\vec{p}}\,(\vec{a},\vec{b}),
\ee
where $G_{FR}$ is the classical Fisher-Rao metric tensor on $\mathsf{S}_{n}$, and we have set $\vec{p}=(p_{1}^{\rho},...,p_{n}^{\rho})$, $\vec{a}=(v_{\rho}^{11},...,v_{\rho}^{nn})$, and $\vec{b}=(w_{\rho}^{11},...,w_{\rho}^{nn})$.
Equation \eqref{eqn: Petz and F-R} holds for every choice of the operator monotone function $f$.

As mentioned before,  the action $\alpha$ of $\Uh$ in \eqref{eqn:definition_alpha_stsp} gives rise to CPTP maps from $\stsph$ into itself.
Moreover, these maps are invertible and their inverses are again CPTP maps from $\stsph$ to itself.
Therefore, the monotonicity property in equation \eqref{eqn: monotonicity property riemannian metric tensors} becomes an invariance property, and we conclude that the fundamental vector fields $X_{\mathbf{b}}$ of the action $\alpha$ \cfr{equation \eqref{eqn:definition_X_and_Y_stsp}} are Killing vector fields for every monotone quantum metric tensor $G_{f}$.
Consequently, the unitary group $\Uh$ acts as a sort of universal symmetry group for the metric tensors classified by Petz and thus occupies a prominent role also in the context of Quantum Information Geometry.

To explicitly prove our main results, it is better to work first on $\posh$ and then “project” the results down to $\stsph$.
Accordingly, we need a suitable extension of the monotone quantum metric tensors to $\posh$,  very much in the spirit of Campbell's work on the extension of the Fisher-Rao metric tensor to the non-normalized case of finite measures.
Kumagai already investigated this problem and provide a complete solution of Petz's problem when the normalization condition on quantum states is lifted \cite{Kumagai-2011}. 
Quite interestingly, the result very much resembles Campbell's result in the sense that the difference with the normalized case is  entirely contained in a function $b\colon \mathbb{R}^{+}\ra\mathbb{R}$    and a family $\{f_{t}\}_{t\in\mathbb{R}^{+}}$ of operator monotone functions satisfying $t b(t) + \frac{1}{f_{t}(1)}>0$.

In our case, however, it is not necessary to exploit the full level of generality of Kumagai's work.
It suffices to find a Riemannian metric tensor $\hat{G}_{f}$ on $\posh$ such that
\be\label{eqn:metric_cone_restricts_to_metric_stsp}
i^{*}\hat{G}_{f}= G_{f}
\ee
where $i\colon\stsph\ra\posh$ is the canonical immersion and $ G_{f}$  is a monotone quantum metric tensor as in equation \eqref{eqn:Petz_characterization}.
Accordingly, we consider $\hat{G}_{f}$ as given by
\begin{equation}\label{eqn:definition_monotone_metric_cone}
	\left( \hat{G}_{f} \right)_\omega ( \mathbf{v}_{\omega}, \mathbf{w}_{\omega}) = \kappa \Tr \left(  \mathbf{v}_{\omega} \left( \mathbf{ K}^f_\omega \right)^{-1} ( \mathbf{w}_{\omega})    \right),
\end{equation} 
where  $f$ is the operator monotone function appearing in equation \eqref{eqn:Petz_characterization}   (and thus satisfying equation \eqref{eqn: monotone function}), $\omega \in \pos$, $ \mathbf{v}_{\omega},\mathbf{w}_{\omega} \in T_\omega \mathcal{ P ( H)} \cong \mathcal{ B}_{sa} (\hh)$, and $\mathbf{ K}^f_\omega$ is as in equation \eqref{eqn:superoperator_K_expression}.
Equation \eqref{eqn:definition_monotone_metric_cone} corresponds to the choice $b=0$ and $f_{t}=f$ in Kumagai's classification.

If we introduce the operators $\mathbf{e}^{\omega}_{lm}$ diagonalizing $\omega$, we can proceed as in the normalized case to obtain an equations analogous to  equation \eqref{eqn: Petz superoperator} so that,  recalling equation \eqref{eqn: monotone function}, we immediately obtain
\be\label{eqn: K superoperator on Delta}
\mathbf{ K}^f_{\omega}\left(\Delta \left( \omega \right)\right)=\mathbf{ K}^f_{\omega}\left(\omega\right)=\omega^{2} \Lra \left(\mathbf{ K}^f_{\omega}\right)^{-1}\left(\Delta \left(\omega \right)\right)=\mathbb{I} .
\ee


\section{Lie groups and monotone quantum metric tensors}\label{sec: main results}

We are interested in classifying all those  actions of  $\Glh$ and $T^* \Uh$ on $\stsph$ that behaves in the way described in the introduction  with respect to suitable monotone quantum metric tensors.
Specifically, we want to find all those actions, say $\delta$,  of either $\Glh$ or $T^* \Uh$ on $\stsph$ for which there is a monotone metric tensor $G_{f}$ on $\stsph$ such that the fundamental vector fields $X_{b}$ of the standard action $\alpha$ of $\Uh$ on $\stsph$ together with the gradient vector fields $Y_{a}^{f}$ associated with the expectation value functions $l_{a}$ close on a representation of the Lie algebra of either  $\Glh$ or $T^* \Uh$ that integrates to the action $\delta$.
From the results in \cite{Ciaglia-2020} we know that there are at least 3 monotone metric tensors for which this construction is possible for any finite-level quantum system.
Moreover, from the results in \cite{C-DN-2021} we know that in the case of a two-level quantum systems, the Lie groups $\Glh$ and $T^* \Uh$ are the only Lie groups for which the construction described above is actually possible.
Here, we want to understand if the group actions of $\Glh$ and $T^* \Uh$ found in \cite{C-DN-2021} can be extended from a 2-level quantum system to a system with an arbitrary, albeit finite, number of levels.
 
At this purpose, it is important to recall all those properties, shared by $\Glh$ and $T^* \Uh$ and by their actions, that are at the heart of the results of \cite{Ciaglia-2020,C-DN-2021}.
First of all, both $\Glh$ and $T^* \Uh$ contain the Lie group $\Uh$ as a Lie subgroup, and contain the elements $\lambda\mathbb{I}$ with $\lambda>0$ and $\mathbb{I}$ the identity operator on $\hh$.
Then, all the (transitive) actions of both $\Glh$ and $T^* \Uh$ on $\stsph$ appearing in the analysis of \cite{Ciaglia-2020,C-DN-2021} arise as a sort of normalization  of suitable (transitive) actions on $\mathcal{P}(\hh)$.
Specifically, if $G$ denotes either $\Glh$ or $T^* \Uh$, then every $G$-action $\delta$ on $\stsph$ can be written as
\be\label{eqn: normalized action}
\delta(\gr,\rho)=\frac{\hat{\delta}(\gr, \rho)}{\Tr(\hat{\delta}(\gr,\rho))}
\ee
with $\hat{\delta}$ a $G$-action on $\posh$ satisfying
\be
\hat{\delta}(\gr,\lambda\omega)=\lambda\hat{\delta}(\gr,\omega)
\ee
for every $\gr\in G$, for every $\omega\in\posh$, and for every $\lambda>0$.
Moreover, among all those actions $\hat{\delta}$ satisfying the properties discussed above, there is a preferred action $\hat{\delta}_{0}$ (the action $\hat{\beta}$ in equation \eqref{eqn: linear action of GL(H)}  for $\Glh$, and the action $\hat{\gamma}$ in equation \eqref{eqn: BKM-action on plf} for $T^{*}\Uh$) such that every relevant action $\hat{\delta}$ can be written as
\be\label{eqn: deformed action}
\hat{\delta}_{\phi} = \phi^{-1}\circ\hat{\delta}_{0} \circ \left(\mathrm{Id}_{G}\times \phi\right)
\ee
with $\phi\colon\posh\ra\posh$ a smooth diffeomorphism arising from a smooth diffeomorphism $\phi\colon \mathbb{R}^{+}\ra\mathbb{R}^{+}$ by means of functional calculus and such that
\be\label{eqn: properties of phi}
\phi\left(\mathbf{U}\omega\mathbf{U}^{\dagger}\right)=\mathbf{U}\phi(\omega)\mathbf{U}^{\dagger}  
\ee
and 
\be\label{eqn: properties of phi 2}
\phi( \omega ) = \sum_{j=1}^n   \phi( \omega_j )   \ket{e_j} \bra{e_j} 
\ee
where $\{ \ket{e_1}, \ket{e_2},\dots,\ket{e_n}  \}$ is a basis of $\hh$ made of eigenvectors of $\omega$.

Equation \eqref{eqn: deformed action} implies that the map $\phi$ is equivariant with respect to the action $\hat{\delta}_{\phi}$ and $\hat{\delta}_{0}$, which in turn implies that the fundamental vector fields of $\hat{\delta}_{\phi}$ are $\phi$-related with that of $\hat{\delta}_{0}$ ({\itshape cfr.} chapter 5 in \cite{A-M-R-2012}).
By the very definition of $\phi$-relatedness ({\itshape cfr.} chapter 4 in \cite{A-M-R-2012}), denoting with $\zeta^{\phi}$ a fundamental vector field of $\hat{\delta}_{\phi}$ and with $\zeta$ a fundamental vector field of $\hat{\delta}_{0}$, it follows that
\be\label{eqn: phi-related fundamental vector fields}
\zeta^{\phi} = T(\phi^{-1}) \circ \zeta  \circ \phi .
\ee

We exploit equation \eqref{eqn: phi-related fundamental vector fields} to explicitly describe how the fundamental vector fields $\hat{Y}_{\mathbf{a}}$ of the action $\hat{\beta}$ of  $\Glh$  on $\posh$  \cfr{equation \eqref{eqn: linear action of GL(H)} and equation \eqref{eqn:definition_X_and_Y_PH}} transform under $\phi$.
We then equate the result with the gradient vector field associated with the expectation value function $\hat{l}_{\mathbf{a}}$ by means of the metric tensor $\hat{G}_{f}$ as in equation \eqref{eqn:definition_monotone_metric_cone} thus obtaining  an explicit characterization of  the diffeomorphism $\phi$ and the operator monotone function $f$ compatible with the equality.
Finally, with this choice of $\phi$ and $f$, we prove that the gradient vector fields $Y^{f}_{\mathbf{a}}$ associated with the expectation value functions $l_{\mathbf{a}}$ on $\stsph$ by means of the monotone quantum metric $G_{f}$ as in equation \eqref{eqn:Petz_characterization} correspond to the fundamental vector fields $Z^{\phi}_{\mathbf{a}\,\mathbf{0}}$ of the action $\beta_{\phi}$ of $\Glh$ on $\stsph$ associated with the action $\hat{\beta}_{\phi}$ on $\posh$.

A similar procedure is then applied to the fundamental vector fields  
 $\hat{W}_{\mathbf{a} \, \mathbf{0}}$ of the action $\hat{\gamma}$ of $T^{*}\Uh$ on $\posh$ \cfr{equation \eqref {eqn: BKM-action on plf} and equation \eqref{eqn: gradient vector fields of BKM on P(H)}}.

\subsection{The general linear group}\label{subsec: Gl(H)}

Following \cite{Ciaglia-2020,C-DN-2021}, when considering the general linear group $\Glh$, the reference action $\hat{\delta}_{0}$ appearing in equation \eqref{eqn: deformed action} is the action  $\hat{\beta}$ in equation \eqref{eqn: linear action of GL(H)}.
Therefore, denoting with $\hat{Z}_{ \mathbf{ a} \, \mathbf{ b}}$ a fundamental vector field of $\hat{\delta}_{0}$ and with $\hat{Z}_{ \mathbf{ a} \, \mathbf{ b}}^{\phi}$ a fundamental vector field of $\hat{\delta}_{\phi}$, from equation  \eqref{eqn:fundamental_vector_fields_beta}, equation \eqref{eqn: phi-related fundamental vector fields}, and \cite[Thm. 5.3.1]{Bhatia-2007}, it follows that 
\be
\hat{Z}_{ \mathbf{ a} \, \mathbf{ 0}}^{\phi}(\omega) =  (\phi^{-1})^{[1]}\left( \phi ( \omega) \right) \square \{ \mathbf{ a}, \phi ( \omega)\},
\ee 
where $\square$ denotes the Schur product with respect to the basis of eigenvectors of $\phi(\omega)$,  and  
\begin{equation}
		(\phi^{-1})^{[1]}\left( \phi ( \omega) \right) = \sum_{\omega_j = \omega_k} \frac{1}{ \phi'(\omega_{j})} \, \ket{e_j} \bra{e_k} +  \sum_{\omega_j \ne \omega_k} \frac{\omega_j - \omega_k}{ \phi(\omega_j)   -  \phi( \omega _k)} \ket{e_j}\bra{e_k} ,
\end{equation}
with  $ \phi(\omega _j)$ the eigenvalues of $\phi(\omega)$ and with $\{ \ket{e_1}, \ket{e_2}, \dots, \ket{e_n} \}$ the basis of $\hh$ of eigenvectors of $\phi(\omega)$ and $\omega$ \cfr{equation  \eqref{eqn: properties of phi 2}}.
Moreover, a direct computation shows that
\begin{equation}
		\{ \phi ( \omega), \mathbf{ a}\} = \frac{1}{2} \sum_{j,k} \left( \phi(\omega_j) + \phi( \omega_k) \right) a_{jk}  \ket{e_j}\bra{e_k},
	\end{equation}
where $a_{jk}$ are the components of $\mathbf{a}$ in the basis given by the eigenvectors of $\omega$.
We thus conclude that
\begin{equation}\label{eqn: gradient vector field Gl(H) 0}
\hat{Z}_{ \mathbf{ a} \, \mathbf{ 0}}^{\phi}( \omega) = \sum_{\omega_j = \omega_k}  a_{jk} \frac{\phi( \omega_j)}{\phi'( \omega_j)} \ket{e_j} \bra{e_k} + \frac{1}{2} \sum_{\omega_j \ne \omega_k} a_{jk} \left( \phi(\omega_j) + \phi( \omega_k) \right) \frac{\omega_j - \omega_k}{\phi(\omega_j) - \phi( \omega_k)} \ket{e_j}\bra{e_k}.
\end{equation}

Now, we require that $\hat{Z}_{ \mathbf{ a} \, \mathbf{ 0}}^{\phi}$ is  the gradient vector field of the expectation value function $\hat{l}_\mathbf{a}$ with respect to the metric tensor $\hat{G}_f$ defined as in Subsection \ref{subsec:geometry_of_the_space_of_quantum_states_information_metrics} in order to characterize the function $f$.
From the very definition of gradient vector field, it follows that
\begin{equation}\label{eqn: gradient vector field Gl(H) 1}
d l_\mathbf{a}(\Gamma)\big|_{\omega} = \left( G_f \right)_\omega ( \hat{Z}_{ \mathbf{ a} \, \mathbf{ 0}}^{\phi}( \omega), \Gamma( \omega)) = \kappa \Tr \left(  \Gamma(\omega) \left( \mathbf{ K}^f_\omega \right)^{-1} \left( \hat{Z}_{ \mathbf{ a} \, \mathbf{ 0}}^{\phi}( \omega) \right) \right)
	\end{equation} 
holds for any vector field $\Gamma$    on $\mathcal{ P ( H)}$.
On the other hand, it also holds that
\begin{equation}\label{eqn: gradient vector field Gl(H) 2}
		d l_\mathbf{a} (\Gamma)\big|_{\omega} =  \Gamma( l_\mathbf{a}) |_{\omega} = \Tr{ \left( \mathbf{ a} \Gamma ( \omega) \right)},
\end{equation}
 so that, comparing equation \eqref{eqn: gradient vector field Gl(H) 1} with equation \eqref{eqn: gradient vector field Gl(H) 2}, we obtain
\begin{equation} \label{eqn: gradient vector field Gl(H) 3}
	\hat{Z}_{ \mathbf{ a} \, \mathbf{ 0}}^{\phi}( \omega) =  \kappa^{-1} \mathbf{K}^f_\omega (\mathbf{ a}) = \kappa^{-1} f( \mathbf{ L}_\omega \mathbf{ R}^{-1}_\omega) \mathbf{ R}_\omega ( \mathbf{ a}).
	\end{equation}

Exploiting equation \eqref{eqn: Petz superoperator} it follows that equation \eqref{eqn: gradient vector field Gl(H) 3} becomes
\be\label{eqn: gradient vector field Gl(H) 4}
\begin{split}
\hat{Z}_{ \mathbf{ a} \, \mathbf{ 0}}^{\phi}( \omega) & =  \sum_{\omega_{j} = \omega_{k}}   \frac{\omega_j a_{jk}}{\kappa} \ket{e_j} \bra{e_k} + \sum_{\omega_{j}\neq\omega_{k}} \frac{\omega_{k}}{\kappa }   f \left( \frac{\omega_j}{\omega_k}\right) \ket{e_j} \bra{e_k} .
\end{split}
\ee
Comparing equation \eqref{eqn: gradient vector field Gl(H) 0} with equation \eqref{eqn: gradient vector field Gl(H) 4},  we obtain  
\be\label{eqn: phi function GL(H)}
\frac{\phi(\omega_j)}{\phi'(\omega_j)} = \kappa^{-1} \omega_j  \quad \ee
and 
\be\label{eqn: f function GL(H)}
\left( \phi(\omega_j) + \phi( \omega_k) \right) \frac{\omega_j - \omega_k}{\phi(\omega_j) - \phi( \omega_k)}= \frac{\omega_{k}}{\kappa }   f \left( \frac{\omega_j}{\omega_k}\right).
\ee
Equation \eqref{eqn: phi function GL(H)} implies
\be\label{eqn: phi function GL(H) 2}
 \phi(x) = c\,  x^{\kappa},
\ee
with $c>0$, so that, because of equation \eqref{eqn: f function GL(H)},  the function $f$ in $\hat{G}_{f}$ must be of the form
\begin{equation} \label{eqn: function f 2}
f ( x) = \frac{\kappa}{2} \frac{ (x - 1) (x^\kappa + 1)}{ x^\kappa - 1}.
\end{equation}
A direct check shows that the function $f$ in equation \eqref{eqn: function f 2} satisfies the properties listed in equation \eqref{eqn: monotone function} for all $\kappa>0$, but we do not know if it is operator monotone for every $\kappa>0$.
The following proposition shows that $f$ is operator monotone if and only if $0<\kappa\leq 1$.

\begin{proposition}\label{prop: opmon}
The function $f$ in equation \eqref{eqn: function f 2} is operator monotone if and only if $0<\kappa<1$.
\end{proposition}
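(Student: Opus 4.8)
\emph{Proof idea.}
The plan is to reduce the operator monotonicity of $f$ to that of the simpler function
\be
g_{\kappa}(x)\,:=\,\frac{x-1}{x^{\kappa}-1}\,=\,\left(\frac{x^{\kappa}-1}{x-1}\right)^{-1},
\ee
to settle the \emph{if} part by an integral representation, and the \emph{only if} part by L\"owner's analytic-continuation criterion. For the reduction one observes that $x\,g_{\kappa}(1/x)=x^{\kappa}g_{\kappa}(x)$, so that
\be
\frac{\kappa}{2}\Big(g_{\kappa}(x)+x\,g_{\kappa}(1/x)\Big)=\frac{\kappa}{2}\,\frac{(x-1)(x^{\kappa}+1)}{x^{\kappa}-1}=f(x),
\ee
i.e.\ $f$ is, up to the factor $\kappa/2$, the Kubo--Ando symmetrisation of $g_{\kappa}$. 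Since the transpose $h(x)\mapsto x\,h(1/x)$ maps operator monotone functions on $(0,\infty)$ to operator monotone functions, and sums and positive multiples of operator monotone functions are again operator monotone, it suffices for the \emph{if} direction to prove that $g_{\kappa}$ itself is operator monotone for $0<\kappa<1$ (the endpoint $\kappa=1$ being trivial, as then $f(x)=(x+1)/2$ is affine).

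For this I would use the elementary identity, valid for every $\kappa>0$,
\be
\psi_{\kappa}(x)\,:=\,\frac{x^{\kappa}-1}{x-1}\,=\,\kappa\int_{0}^{1}\big(t\,x+(1-t)\big)^{\kappa-1}\,\mathrm{d}t,
\ee
which follows from the fundamental theorem of calculus. When $0<\kappa<1$ the exponent satisfies $\kappa-1\in(-1,0)$, hence $y\mapsto y^{\kappa-1}$ is operator monotone \emph{decreasing} on $(0,\infty)$; composing with the operator monotone affine maps $x\mapsto t\,x+(1-t)$ and integrating the corresponding L\"owner inequalities over $t\in(0,1)$ shows that $\psi_{\kappa}$ is operator monotone decreasing and strictly positive on $(0,\infty)$. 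Because $0<X\le Y$ implies $Y^{-1}\le X^{-1}$, it follows that $g_{\kappa}=\psi_{\kappa}^{-1}$ is operator monotone on $(0,\infty)$, and the reduction above then yields that $f$ is operator monotone for all $\kappa\in(0,1)$.

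For the \emph{only if} direction I would invoke L\"owner's theorem in the form: a function real-analytic on $(0,\infty)$ is operator monotone there if and only if its (unique) holomorphic continuation maps the upper half-plane into its closure. For $0<\kappa<2$ the function $f$ does extend holomorphically to $\mathbb{C}\setminus(-\infty,0]$ (the only zero of $z^{\kappa}-1$ there, at $z=1$, being cancelled by the numerator) and continuously to the cut, and a direct computation of the boundary value at $z=-s+\mathrm{i}0^{+}$, $s>0$, using $z^{\kappa}=s^{\kappa}\mathrm{e}^{\mathrm{i}\kappa\pi}$, gives
\be
\mathrm{Im}\, f(-s+\mathrm{i}0^{+})\,=\,\frac{\kappa\,(s+1)\,s^{\kappa}\,\sin(\kappa\pi)}{\big(s^{\kappa}\cos(\kappa\pi)-1\big)^{2}+\big(s^{\kappa}\sin(\kappa\pi)\big)^{2}}\,.
\ee
For $0<\kappa<1$ this is non-negative --- which, combined with a Phragm\'en--Lindel\"of argument using that $f(z)-\tfrac{\kappa}{2}z$ grows only like $|z|^{1-\kappa}$ at infinity, reproves the \emph{if} part analytically --- whereas for $1<\kappa<2$ one has $\sin(\kappa\pi)<0$, so the boundary value is strictly negative and $f$ cannot be a Pick function. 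Finally, for $\kappa\ge 2$ one argues more cheaply: every operator monotone function on $(0,\infty)$ is non-decreasing, while $f(0^{+})=\kappa/2>1=f(1)$ for $\kappa>2$, and for $\kappa=2$ one has $f(x)=\tfrac{x^{2}+1}{x+1}$, which is decreasing near $0$; hence no $\kappa\ge2$ works (indeed, for $\kappa>2$ the function $f$ even acquires genuine poles in the open upper half-plane at $\mathrm{e}^{2\pi\mathrm{i}/\kappa}$).

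The step I expect to require the most care is the bookkeeping in the \emph{only if} direction: stitching together the sub-cases $1<\kappa<2$ (complex-analytic obstruction), $\kappa=2$ and $\kappa>2$ (monotonicity obstruction) into a seamless argument with no gap at the endpoints, and being explicit about branches and continuations when $\kappa$ is large. The only non-elementary input on the \emph{if} side is that the Kubo--Ando transpose preserves operator monotonicity; if one prefers, this can be bypassed entirely by running the Phragm\'en--Lindel\"of/L\"owner argument directly on $f$, as sketched above. Everything else --- the integral identity, the inversion step, and the boundary-value computation --- is routine.
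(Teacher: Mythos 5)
Your proof is correct, but it takes a genuinely different route from the paper's. For the \emph{if} direction ($0<\kappa<1$), the paper first takes $\kappa=k/n$ rational, uses the factorization $1-x^{k/n}=(1-x^{1/n})\sum_{l=0}^{k-1}x^{l/n}$ to split $f$ into a sum of terms (powers, constants, and reciprocals of sums of negative powers), invokes Furuta's lemmas to see that each summand is operator monotone, and then passes to irrational $\kappa$ by continuity; you instead write $f(x)=\tfrac{\kappa}{2}\bigl(g_{\kappa}(x)+x\,g_{\kappa}(1/x)\bigr)$ and prove operator monotonicity of $g_{\kappa}=\bigl((x^{\kappa}-1)/(x-1)\bigr)^{-1}$ via the integral representation of $\psi_{\kappa}$ and operator inversion, which treats all $\kappa\in(0,1)$ at once with no rational-approximation or continuity step --- a cleaner and more self-contained \emph{if} direction, at the price of the transpose fact, which you should state with the positivity hypothesis (the map $h\mapsto x\,h(1/x)$ preserves operator monotonicity for \emph{non-negative} $h$; your $g_{\kappa}>0$, so the application is fine). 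For the \emph{only if} direction the comparison goes the other way: the paper disposes of every $\kappa>1$ in one stroke by computing $\lim_{x\to 0^{+}}f'(x)=-\kappa/2<0$, so $f$ already fails to be numerically monotone near $0$; your three-way split (Pick-function boundary values for $1<\kappa<2$, separate scalar arguments for $\kappa=2$ and $\kappa>2$) is correct --- your boundary-value formula checks out --- but heavier than necessary, since the same derivative limit also covers $1<\kappa<2$ (the problematic term $x^{\kappa-1}$ still vanishes at $0$ there), so the complex-analytic machinery can be dropped entirely. Finally, like the paper's own proof, you include the endpoint $\kappa=1$; the ``$0<\kappa<1$'' in the statement is evidently a slip for $0<\kappa\leq 1$, consistent with the surrounding text and with proposition \ref{prop: GL(H) actions and monotone quantum metric tensors}.
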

\begin{proof}

When $\kappa=1$ it is $f(x)=\frac{1+x}{2}$ which is known to be operator monotone and to be associated with the Bures-Helstrom metric tensor \cite{Petz-1996}.


The function $f$ as in equation \eqref{eqn: function f 2} is clearly $C^{1}$ in $(0,+\infty)$ and it is continuous in $[0,+\infty)$.
When $ \kappa>1$, it holds
\be\label{eqn: opmon}
\lim_{x\ra 0^{+}}f'(x)=\lim_{x\ra 0^{+}} \frac{\kappa}{2}\,\left(-\frac{1+x^{\kappa}}{1-x^{\kappa}} + 2\kappa\frac{x^{\kappa -1}  (1-x)}{(x^{\kappa}-1)^{2}}\right)= -\frac{\kappa}{2}<0
\ee
which means that there is $\epsilon> 0$ such that $f(x)$ is decreasing for $x\in(0,\epsilon)$ and thus $f$ can not be operator monotone.
Note that \eqref{eqn: opmon} is no longer valid when $0<\kappa < 1$ because of the term $	x^{\kappa -1}$


When $\kappa\in (0,1)$ we consider the rational case $\kappa=\frac{k}{n}$ with $k < n$ since the passage to an irrational $\kappa\in (0,1)$ is obtained by continuity just as in \cite[Prop. 3.1]{Furuta-2008}.
Following \cite[Prop. 3.1]{Furuta-2008}, we write
\be\label{eqn: opmon1}
1-x^{\frac{k}{n}}=(1-x^{\frac{1}{n}})\,\sum_{l=0}^{k-1}x^{\frac{l}{n}}
\ee
so that
\be\label{eqn: opmon2}
\begin{split}
f(x)&=\frac{k}{2n} \frac{ (x - 1) (x^{\frac{k}{n}} + 1)}{ x^{\frac{k}{n}} - 1}=\frac{k(x^{\frac{k}{n}} + 1)}{2n}\left(\frac{\sum_{l=0}^{n-1}x^{\frac{l}{n}}}{\sum_{j=0}^{k-1}x^{\frac{j}{n}}}\right)=\frac{k(x^{\frac{k}{n}} + 1)}{2n}\left(1+\frac{\sum_{l=k}^{n-1}x^{\frac{l}{n}}}{\sum_{j=0}^{k-1}x^{\frac{j}{n}}}\right)= \\
&= \frac{k}{2n}\left(x^{\frac{k}{n}} + 1 + \sum_{l=k}^{n-1}\left(\sum_{j=0}^{k-1} x^{\frac{j-l}{n}}\right)^{-1} + \sum_{l=k}^{n-1}\left(\sum_{j=0}^{k-1}x^{\frac{j-l-k}{n}}\right)^{-1}\right) .
\end{split}
\ee
Since $j<k<l<n$, the functions
\be
g(x)=\left(\sum_{j=0}^{k-1} x^{\frac{j-l}{n}}\right)^{-1} \quad \mbox{and}\quad h(x)=\left(\sum_{j=0}^{k-1}x^{\frac{j-l-k}{n}}\right)^{-1}
\ee
are operator monotone according to \cite[Thm. LH-1]{Furuta-2008}, and thus the function $f$ in equation \eqref{eqn: opmon2} is operator monotone because it is the sum of operator monotone functions.


\end{proof}

Finally, when $\phi$ is as in equation \eqref{eqn: phi function GL(H)} and  $f$ is as in equation \eqref{eqn: function f 2}, we prove that the fundamental vector fields $Z_{\mathbf{a}\,\mathbf{0}}^{\phi}$ of the normalized action $\beta_{\phi}$ of $\Glh$ on $\stsph$ associated with $\hat{\beta}_{\phi}$ by means of equation \eqref{eqn: normalized action} are indeed the gradient vector fields associated with the expectation value functions $l_{\mathbf{a}}$ by means of the monotone metric tensor $G_{f}$.
Indeed, from equation  \eqref{eqn: normalized action} it follows that \be\label{eqn:definition_W_a}
\begin{split}
Z_{\mathbf{a}\,\mathbf{0}}^{\phi}( \rho) & = \frac{d}{dt}\beta_{\phi} \left( \exp\left( \frac{t}{2} ( \mathbf{a}, \mathbf{0}) \right), \rho \right)\Big|_{t = 0}   = \frac{d}{dt}  \frac{ \hat{\beta}_{\phi} \left( \exp\left( \frac{t}{2} ( \mathbf{a}, \mathbf{0}) \right), i ( \rho) \right)}{ \Tr \left(\hat{\beta}_{\phi} ( \exp\left( \frac{t}{2} ( \mathbf{a}, \mathbf{0}) \right) , i( \rho)) \right)}\Bigg|_{t = 0} = \\
& = \hat{Z}_{ \mathbf{ a} \, \mathbf{ 0}}^{\phi}(i(\rho)) -  \Tr\left(\hat{Z}_{ \mathbf{ a} \, \mathbf{ 0}}^{\phi}(i(\rho)) \right)\,\Delta(i(\rho)) .
\end{split}
\end{equation}
Equation \eqref{eqn:definition_W_a} is equivalent to
\begin{equation} \label{eqn:relation_fundamental_vector_fields_alpha_S_and_alpha_P_geom}
T_{\rho}i\left(Z_{\mathbf{a}\,\mathbf{0}}^{\phi}( \rho)\right) = \hat{Z}_{ \mathbf{ a} \, \mathbf{ 0}}^{\phi}(i(\rho)) -  \Tr\left(\hat{Z}_{ \mathbf{ a} \, \mathbf{ 0}}^{\phi}(i(\rho)) \right)\,\Delta(i(\rho))  
\end{equation}
for all $\mathbf{a}\in \bhsa$ and all $\rho\in\stsph$, and this last instance is  equivalent to the fact that $Z_{\mathbf{a}\,\mathbf{0}}^{\phi}$ is $i$-related with the vector field $\hat{Z}_{ \mathbf{ a} \, \mathbf{ 0}}^{\phi} - (\mathcal{L}_{\hat{Z}_{ \mathbf{ a} \, \mathbf{ 0}}^{\phi}}\hat{l}_{\mathbb{I}})\Delta$ for all $\mathbf{a}\in \bhsa$.

We now set 
\be
\hat{Y}^{\phi}_{\mathbf{a}}:=\hat{Z}_{ \mathbf{ a} \, \mathbf{ 0}}^{\phi}, \quad \mbox{ and } \quad  Y^{\phi}_{\mathbf{a}}:=Z_{ \mathbf{ a} \, \mathbf{ 0}}^{\phi} .
\ee
To finish the proof of the proposition, we need to prove that $Y^{\phi}_{\mathbf{a}}$ is actually the gradient vector field of the expectation value function $l_{\mathbf{a}}$ for every $\mathbf{a}\in\bhsa$.
At this purpose, we compute
\be\label{eqn: relation between gradient vector fields of hat(delta) and delta}
\begin{split}
(G_{f}(Y^{\phi}_{\mathbf{a}},V))(\rho)&=(i^{*}\hat{G}_{f}(Y^{\phi}_{\mathbf{a}},V))(\rho)=(\hat{G}_{f})_{i(\rho)}(T_{\rho}i(Y_{\mathbf{a}}(\rho)),T_{\rho}i(V(\rho)))= \\
&\stackrel{\mbox{\eqref{eqn:relation_fundamental_vector_fields_alpha_S_and_alpha_P_geom}}}{=} (\hat{G}_{f})_{i(\rho)}\left(\hat{Y}^{\phi}_{\mathbf{a}} \left( i(\rho) \right)  - \Tr \left( \hat{Y}^{\phi}_{\mathbf{a}}\left( i (\rho) \right) \right)  \Delta \left( i(\rho) \right),T_{\rho}i(V(\rho))\right).
\end{split}
\ee
Since we proved that $\hat{Y}^{\phi}_{\mathbf{a}}$ is be the gradient vector field associated with $\hat{l}_{\mathbf{a}}$ by means of $\hat{G}_{f}$, equation \eqref{eqn: relation between gradient vector fields of hat(delta) and delta} becomes
\be\label{eqn: relation between gradient vector fields of hat(delta) and delta 2}
\begin{split}
(G_{f}(Y^{\phi}_{\mathbf{a}},V))(\rho)&=\left(\mathcal{L}_{V} l_{\mathbf{a}}\right)(\rho)  - \Tr \left( \hat{Y}^{\phi}_{\mathbf{a}}\left( i (\rho) \right) \right)  (\hat{G}_{f})_{i(\rho)}\left(\Delta \left( i(\rho) \right),T_{\rho}i(V(\rho))\right).
\end{split}
\ee
The second term in the right-hand-side of equation \eqref{eqn: relation between gradient vector fields of hat(delta) and delta 2} vanishes.
Indeed, equation  \eqref{eqn:definition_monotone_metric_cone} implies that
\be\label{eqn: relation between gradient vector fields of hat(delta) and delta 3}
(\hat{G}_{f})_{i(\rho)}\left(\Delta \left( i(\rho) \right),T_{\rho}i(V(\rho))\right)=  \Tr \left(T_{\rho}i(V(\rho))  \left( \mathbf{ K}^f_{i (\rho)} \right)^{-1}  \left(\Delta \left( i(\rho) \right) \right)    \right).
\ee
From equation \eqref{eqn: K superoperator on Delta} we conclude that equation \eqref{eqn: relation between gradient vector fields of hat(delta) and delta 3} becomes
\be\label{eqn: relation between gradient vector fields of hat(delta) and delta 4}
(\hat{G}_{f})_{i(\rho)}\left(\Delta \left( i(\rho) \right),T_{\rho}i(V(\rho))\right)=  \Tr \left(T_{\rho}i(V(\rho)) \right)= 0,
\ee
where the last equality follows from equation \eqref{eqn: tangent space at a state}.
Inserting equation \eqref{eqn: relation between gradient vector fields of hat(delta) and delta 4} in equation \eqref{eqn: relation between gradient vector fields of hat(delta) and delta 2} we obtain
\be\label{eqn: relation between gradient vector fields of hat(delta) and delta 5}
\begin{split}
(G_{f}(Y^{\phi}_{\mathbf{a}},V))(\rho)&=\left(\mathcal{L}_{V} l_{\mathbf{a}}\right)(\rho)  
\end{split}
\ee
for every fundamental vector field of $\beta_{\phi}$ of the type $Y^{\phi}_{\mathbf{a}}$, for every vector field $V$ on $\stsph$ and for and every $\rho\in\stsph$.
Equation \eqref{eqn: relation between gradient vector fields of hat(delta) and delta 5} is equivalent to the fact that $Y^{\phi}_{\mathbf{a}}$ is the gradient vector field associated with the expectation value function $l_{\mathbf{a}}$ by means of $G_{f}$ for every $\mathbf{a}\in\bhsa$ as desired. 

Collecting the results proved in this subsection we obtain the following proposition.

\begin{proposition}\label{prop: GL(H) actions and monotone quantum metric tensors}
The function $f$ given by
\begin{equation} \label{eqn: function f 3}
f ( x) = \frac{\kappa}{2} \frac{ (x - 1) (x^\kappa + 1)}{ x^\kappa - 1} 
\end{equation}
is operator monotone and satisfies equation \eqref{eqn: monotone function} if and only if   $0 < \kappa\leq 1$.
In these cases, denoting with $\{X_{\mathbf{b}}\}_{\mathbf{b}\in\mathcal{B}_{sa}(\hh)}$  the fundamental vector fields of the canonical action $\alpha$ of $\Uh$ on $\stsph$ as in equation \eqref{eqn:definition_alpha_stsp},  if $G_{f}$ is the associated monotone quantum metric tensor on $\stsph$ as in equation \eqref{eqn:Petz_characterization} and   $Y^{f}_{\mathbf{a}}$ is the gradient vector field associated with the expectation value function $l_{\mathbf{a}}$ with $\mathbf{a}\in\mathcal{B}_{sa}(\hh)$, the family $\{Y^{f}_{\mathbf{a}},X_{\mathbf{b}}\}_{\mathbf{a},\mathbf{b}\in\mathcal{B}_{sa} (\hh)}$ of vector fields on $\stsph$  close an anti-representation of the Lie algebra of the general linear group $\Glh$  integrating to the group action 
\begin{equation} \label{eqn:family of actions of special linear group on states 2}
\beta^{\kappa}(\gr, \rho) = \frac{ (\gr \rho^{\sqrt{\kappa}} \gr^\dagger)^{\frac{1}{\sqrt{\kappa}}}} {\Tr\left( \left( \gr \rho^{\sqrt{\kappa}} \gr^\dagger \right)^{ \frac{1}{\sqrt{\kappa}}} \right)}.
\end{equation}
The action $\beta^{\kappa}$ in equation \eqref{eqn:family of actions of special linear group on states 2} is transitive on $\stsph$ for every $0<\kappa \leq 1$.
In particular, when $\kappa=1$ we recover the Bures-Helstrom metric tensor and the action $\beta$ in equation \eqref{eqn:action of general linear group on states BH}, while when $\kappa=\frac{1}{4}$ we recover the Wigner-Yanase metric tensor and the action $\beta_{WY}$ in equation \eqref{eqn:action of general linear group on states WY}.
\end{proposition}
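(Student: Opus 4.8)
The plan is to assemble the statement from pieces already established in the subsection, then do the two genuinely new verifications: that the fundamental vector fields close on an anti-representation of $\mathfrak{gl}(\hh)$, and that this anti-representation integrates to the explicit action $\beta^{\kappa}$. The operator-monotonicity-iff part of the claim is exactly Proposition \ref{prop: opmon} (together with the trivial check, already noted, that $f$ satisfies \eqref{eqn: monotone function} for all $\kappa>0$), so that clause needs no new work. Likewise, the identification of $Y^{f}_{\mathbf{a}}$ with the fundamental vector field $Y^{\phi}_{\mathbf{a}} = Z^{\phi}_{\mathbf{a}\,\mathbf{0}}$ of the normalized action $\beta_{\phi}$, for $\phi(x)=c\,x^{\kappa}$ and $f$ as in \eqref{eqn: function f 2}, is precisely \eqref{eqn: relation between gradient vector fields of hat(delta) and delta 5}. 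And the $X_{\mathbf{b}}$ are the fundamental vector fields of $\alpha$, which is the restriction of $\beta_{\phi}$ to $\Uh$, since $\phi$ commutes with conjugation by unitaries \eqref{eqn: properties of phi}. So the skeleton is: the $\{X_{\mathbf{b}}\}$ and the $\{Y^{f}_{\mathbf{a}}\}$ are \emph{all} of the fundamental vector fields $\{Z^{\phi}_{\mathbf{a}\,\mathbf{b}}\}$ of a single smooth action $\beta_{\phi}$ of $\Glh$ on $\stsph$; since the fundamental vector fields of any smooth left action of a Lie group close an anti-representation of its Lie algebra which integrates back to that action, we are done once we pin down $\beta_{\phi}$ explicitly.

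First I would make the deformation diffeomorphism explicit: with $\phi(x)=x^{\kappa}$ (the constant $c$ is irrelevant after normalization), set $s=\sqrt{\kappa}$ so that $\phi = \psi^{2}$ where $\psi(x)=x^{s}$, or more directly compute $\hat\beta_{\phi} = \phi^{-1}\circ\hat\beta\circ(\mathrm{Id}\times\phi)$ via functional calculus: $\hat\beta_{\phi}(\gr,\omega) = (\gr\,\omega^{\kappa}\,\gr^{\dagger})^{1/\kappa}$. This is an action on $\posh$ by \eqref{eqn: deformed action} (conjugation of the linear action $\hat\beta$ by the diffeomorphism $\phi$ of $\posh$), it is positively homogeneous of degree one in $\omega$ because $\hat\beta(\gr,\lambda^{\kappa}\omega^{\kappa}) = \lambda^{\kappa}\hat\beta(\gr,\omega^{\kappa})$ and $(\lambda^{\kappa}\,\cdot\,)^{1/\kappa}=\lambda(\cdot)^{1/\kappa}$, and it restricts on $\Uh$ to $\hat\alpha$ since $\phi$ intertwines with unitary conjugation. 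Hence its normalization $\beta_{\phi}$, via \eqref{eqn: normalized action}, is a well-defined smooth transitive action of $\Glh$ on $\stsph$, given in closed form by
\begin{equation}
\beta_{\phi}(\gr,\rho) = \frac{(\gr\,\rho^{\kappa}\,\gr^{\dagger})^{1/\kappa}}{\Tr\left((\gr\,\rho^{\kappa}\,\gr^{\dagger})^{1/\kappa}\right)}.
\end{equation}
To match the normalization of the parameter used in the statement I would rescale: writing $\gr = \gr$ but substituting $\rho^{\kappa} = (\rho^{\sqrt{\kappa}})^{\sqrt{\kappa}}$ and $1/\kappa = (1/\sqrt{\kappa})\cdot(1/\sqrt{\kappa})$ is merely bookkeeping — the cleanest route is to observe that \eqref{eqn:family of actions of special linear group on states 2} is literally $\beta_{\phi}$ with $\phi(x)=x^{\sqrt{\kappa}}$ rather than $x^{\kappa}$, i.e. the $\kappa$ in the final statement is the square of the $\kappa$ appearing in \eqref{eqn: phi function GL(H) 2}–\eqref{eqn: function f 2}; I would simply reconcile the two conventions explicitly (so $f(x) = \tfrac{\sqrt{\kappa}}{2}\tfrac{(x-1)(x^{\sqrt{\kappa}}+1)}{x^{\sqrt{\kappa}}-1}$ after the reparametrization, and operator monotonicity holds iff $0<\sqrt{\kappa}\le1$, i.e. $0<\kappa\le1$). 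Transitivity of $\beta^{\kappa}$ then follows from transitivity of $\hat\beta$ on $\posh$ (any two faithful states have equal-rank, hence $\Glh$-conjugate, $\kappa$-th powers) together with surjectivity of the normalization $\pi$.

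Next, the Lie-algebraic closure: the vector fields $X_{\mathbf{b}}$ (for $\mathbf{b}$ skew-adjoint, i.e. $i\mathbf{b}$ in the parametrization of \eqref{eqn:curve_on_GLH}) and $Y^{f}_{\mathbf{a}} = Y^{\phi}_{\mathbf{a}}$ (for $\mathbf{a}$ self-adjoint) are exactly $Z^{\phi}_{\mathbf{0}\,\mathbf{b}}$ and $Z^{\phi}_{\mathbf{a}\,\mathbf{0}}$, and a general $Z^{\phi}_{\mathbf{a}\,\mathbf{b}} = X_{\mathbf{b}} + Y^{\phi}_{\mathbf{a}}$ is the fundamental vector field of $\beta_{\phi}$ attached to $\tfrac12(\mathbf{a}-i\mathbf{b})\in\mathfrak{gl}(\hh)$. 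For any smooth left action the map $\xi\mapsto Z_{\xi}$ is an anti-homomorphism of Lie algebras (the sign convention "anti-representation" is precisely what one gets for left actions with the standard conventions, cf. \cite{A-M-R-2012}), and the flow of $Z_{\xi}$ through $\rho$ is $t\mapsto\beta_{\phi}(\exp(t\xi),\rho)$, so it integrates to $\beta_{\phi}=\beta^{\kappa}$. The only point needing care here is that this abstract argument presupposes we have genuinely produced a smooth $\Glh$-action on $\stsph$ — which is why the explicit closed form above matters — and that the $Y^{f}_{\mathbf{a}}$ obtained from the metric really coincide with the $Z^{\phi}_{\mathbf{a}\,\mathbf{0}}$, which is \eqref{eqn:relation_fundamental_vector_fields_alpha_S_and_alpha_P_geom}–\eqref{eqn: relation between gradient vector fields of hat(delta) and delta 5}.

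The main obstacle is not any single computation but the conventions bookkeeping around $\kappa$ versus $\sqrt{\kappa}$ and the constant $c$ in $\phi(x)=c\,x^{\kappa}$: one must check that the normalization in \eqref{eqn: normalized action} kills $c$ (it does, by homogeneity) and that the exponent in the final action \eqref{eqn:family of actions of special linear group on states 2} is the one forced by \eqref{eqn: phi function GL(H) 2} after this reparametrization, so that "$\kappa=1$ gives Bures–Helstrom, $\kappa=\tfrac14$ gives Wigner–Yanase" comes out correctly (at $\kappa=\tfrac14$, $\sqrt{\kappa}=\tfrac12$, so $\phi(x)=\sqrt{x}$ and $\beta^{1/4}$ is exactly $\beta^{WY}$ of \eqref{eqn:action of general linear group on states WY}; at $\kappa=1$, $\phi=\mathrm{Id}$ and $\beta^{1}=\beta$). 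A secondary, more structural point to state carefully is smoothness of $\beta^{\kappa}$ with respect to the manifold structure on $\stsph$ — this follows because $\phi$ is a diffeomorphism of $\posh$, $\hat\beta$ is smooth (it is algebraic), and $\pi$ is a submersion, so $\beta_{\phi}$ is smooth; and because $\beta_{\phi}$ is transitive, the manifold structure it induces on $\stsph$ as a homogeneous space agrees with the one induced by $\beta$, exactly as argued for $\hat\gamma$ and $\gamma$ earlier in the excerpt.
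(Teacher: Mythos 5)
Your proposal is correct and follows essentially the same route as the paper, whose ``proof'' of this proposition is simply the assembly of the preceding subsection's results ($\phi(x)=c\,x^{\kappa}$, the form of $f$ in \eqref{eqn: function f 2}, Proposition \ref{prop: opmon}, and the identification \eqref{eqn: relation between gradient vector fields of hat(delta) and delta 5} of $Z^{\phi}_{\mathbf{a}\,\mathbf{0}}$ with the gradient fields); you only make explicit what the paper leaves implicit, namely the closed form $\hat{\beta}_{\phi}(\gr,\omega)=(\gr\,\omega^{\kappa}\,\gr^{\dagger})^{1/\kappa}$, the standard fact that fundamental vector fields of a smooth left action close an anti-representation integrating back to that action, and transitivity. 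Your reconciliation of the parametrizations is also the correct reading of the statement: the $\kappa$ in \eqref{eqn:family of actions of special linear group on states 2} is the square of the $\kappa$ in \eqref{eqn: phi function GL(H) 2}--\eqref{eqn: function f 2}, which is exactly what makes the $\kappa=\tfrac{1}{4}$ Wigner--Yanase and $\kappa=1$ Bures--Helstrom identifications consistent for both the metric and the action.
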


\subsection{The cotangent group of the unitary group}

Following what is done in subsection \ref{subsec: Gl(H)}, we consider an action $\hat{\gamma}_{\phi}$ associated with the action $\hat{\gamma}$ \cfr{equation \eqref{eqn: BKM-action on plf}} by means of equation \eqref{eqn: normalized action} with $\hat{\delta}_{0}\equiv\hat{\gamma}$.
The fundamental vector fields $\hat{W}_{\mathbf{a}\,\mathbf{0}}^{\phi}$ of $\hat{\gamma}_{\phi}$ are   obtained as follows.
From equation  \ \eqref{eqn: gradient vector fields of BKM on P(H)}, equation \eqref{eqn: phi-related fundamental vector fields}, and \cite[Thm. 5.3.1]{Bhatia-2007}, it follows that 
\be\label{eqn: T*U(H) vector fields}
\hat{W}_{ \mathbf{ a} \, \mathbf{ 0}}^{\phi}(\omega) =  (\phi^{-1})^{[1]}\left( \phi ( \omega) \right) \square \,\hat{W}_{\mathbf{a} \, \mathbf{0}}(\phi(\omega)),
\ee 
where $\square$ denotes the Schur product with respect to the basis of eigenvectors of $\phi(\omega)$,  and  
\be\label{eqn: T*U(H) vector fields 2} 
		(\phi^{-1})^{[1]}\left( \phi ( \omega) \right) = \sum_{\omega_j = \omega_k} \frac{1}{ \phi'(\omega_{j})}  \ket{e_j} \bra{e_k} +  \sum_{\omega_j \ne \omega_k} \frac{\omega_j - \omega_k}{ \phi(\omega_j)   -  \phi( \omega _k)} \ket{e_j}\bra{e_k} ,
\ee
with  $ \phi(\omega _j)$ the eigenvalues of $\phi(\omega)$ and with $\{ \ket{e_1}, \ket{e_2}, \dots, \ket{e_n} \}$ the basis of $\hh$ of eigenvectors of $\phi(\omega)$ and $\omega$ \cfr{equation  \eqref{eqn: properties of phi 2}}.
On the other hand, from equation  \eqref{eqn: gradient vector fields of BKM on P(H)} and equation  \eqref{eqn: properties of phi 2} it follows that
\be\label{eqn: T*U(H) vector fields 3}
\begin{split}
\hat{W}_{\mathbf{a} \, \mathbf{0}}(\phi(\omega))&= \int_{0}^{1}\,\mathrm{d}\lambda\,\left((\phi(\omega))^{\lambda}\,\mathbf{a} \,(\phi(\omega))^{1-\lambda}\right)=\\
&=\sum_{\omega_{j}= \omega_{k}} \phi(\omega_j) a_{jk} \ket{e_j}\bra{e_k} + \sum_{\omega_{j}\neq\omega_{k}}  \frac{ \phi(\omega_{j}) -\phi(\omega_{k} )}{\ln\left(\frac{\phi(\omega_{j})}{\phi(\omega_{k})}\right)} a_{jk}  \ket{e_j}\bra{e_k},
\end{split}
\ee
so that, exploiting equation \eqref{eqn: T*U(H) vector fields 2} and equation  \eqref{eqn: T*U(H) vector fields 3}, equation \eqref{eqn: T*U(H) vector fields} becomes
\be\label{eqn: T*U(H) vector fields 5}
\hat{W}_{ \mathbf{ a} \, \mathbf{ 0}}^{\phi}(\omega)=\sum_{\omega_j = \omega_k} \frac{\phi(\omega_j) }{ \phi'(\omega_{j})} a_{jk} \ket{e_j} \bra{e_k} +  \sum_{\omega_j \ne \omega_k} \frac{\omega_j - \omega_k}{\ln\left(\frac{\phi(\omega_{j})}{\phi(\omega_{k})}\right)} a_{jk} \ket{e_j}\bra{e_k} 
\ee

In analogy with what is done in subsection \ref{subsec: Gl(H)}, we now require that $\hat{W}_{ \mathbf{ a} \, \mathbf{ 0}}^{\phi}$ is  the gradient vector field of the expectation value function $\hat{l}_\mathbf{a}$ with respect to a metric tensor $\hat{G}_f$ defined as in Subsection \ref{subsec:geometry_of_the_space_of_quantum_states_information_metrics} in order to characterize the function $f$.
From the very definition of gradient vector field, it follows that
\begin{equation}\label{eqn: gradient vector field T*U(H) 1}
d l_\mathbf{a}(\Gamma)\big|_{\omega} = \left( G_f \right)_\omega ( \hat{W}_{ \mathbf{ a} \, \mathbf{ 0}}^{\phi}( \omega), \Gamma( \omega)) = \kappa \Tr \left(  \Gamma(\omega) \left( \mathbf{ K}^f_\omega \right)^{-1} \left( \hat{W}_{ \mathbf{ a} \, \mathbf{ 0}}^{\phi}( \omega) \right) \right)
	\end{equation} 
holds for any vector field $\Gamma$    on $\mathcal{ P ( H)}$.
On the other hand, it also holds that
\begin{equation}\label{eqn: gradient vector field T*U(H) 2}
		d l_\mathbf{a} (\Gamma)\big|_{\omega} =  \Gamma( l_\mathbf{a}) |_{\omega} = \Tr{ \left( \mathbf{ a} \Gamma ( \omega) \right)},
\end{equation}
 so that, comparing equation \eqref{eqn: gradient vector field T*U(H) 1} with equation \eqref{eqn: gradient vector field T*U(H) 2}, we obtain
\begin{equation} \label{eqn: gradient vector field T*U(H) 3}
	\hat{W}_{ \mathbf{ a} \, \mathbf{ 0}}^{\phi}( \omega) =  \kappa^{-1} \mathbf{K}^f_\omega (\mathbf{ a}) = \kappa^{-1} f( \mathbf{ L}_\omega \mathbf{ R}^{-1}_\omega) \mathbf{ R}_\omega ( \mathbf{ a}).
	\end{equation}

Exploiting equation \eqref{eqn: Petz superoperator} it follows that equation \eqref{eqn: gradient vector field T*U(H) 3} becomes
\be\label{eqn: gradient vector field T*U(H) 4}
\begin{split}
\hat{W}_{ \mathbf{ a} \, \mathbf{ 0}}^{\phi}( \omega) & =  \sum_{\omega_{j} = \omega_{k}}   \frac{\omega_j a_{jk}}{\kappa} \ket{e_j} \bra{e_k} + \sum_{\omega_{j}\neq\omega_{k}} \frac{\omega_{k}}{\kappa }   f \left( \frac{\omega_j}{\omega_k}\right) \ket{e_j} \bra{e_k} .
\end{split}
\ee
Comparing equation \eqref{eqn: T*U(H) vector fields 5} with equation \eqref{eqn: gradient vector field T*U(H) 4},  we obtain  
\be\label{eqn: phi function T*U(H)}
\frac{\phi(\omega_j)}{\phi'(\omega_j)} = \kappa^{-1} \omega_j  \quad \ee
and 
\be\label{eqn: f function T*U(H)}
\frac{\omega_j - \omega_k}{\ln\left(\frac{\phi(\omega_{j})}{\phi(\omega_{k})}\right)}= \frac{\omega_{k}}{\kappa }   f \left( \frac{\omega_j}{\omega_k}\right).
\ee
Equation \eqref{eqn: phi function T*U(H)} implies
\be\label{eqn: phi function T*U(H) 2}
 \phi(x) = c\,  x^{\kappa},
\ee
with $c>0$, and it is worth noting that the family of diffeomorphisms found here is the same as that found in subsection \ref{subsec: Gl(H)} in the case of the general linear group $\Glh$ \cfr{equation \eqref{eqn: phi function GL(H) 2}}.
Because of equation \eqref{eqn: phi function T*U(H) 2} and  equation \eqref{eqn: f function T*U(H)}, the function $f$ in $\hat{G}_{f}$ must be of the form
\be\label{eqn: function f 2 T*U(H)}
f ( x) = \kappa \,\frac{x -1 }{ \ln\left(x\right)},
\ee
which is precisely the operator monotone function associated with the Bogoliubov-Kubo-Mori metric tensor up to the constant $\kappa$ \cite{Petz-1996}.
Note that the positive constant $\kappa$ is here arbitrary differently from what happens for $\Glh$ \cfr{subsection \ref{subsec: Gl(H)}}.

It is a matter of direct computation to check that the form of $\phi$ in equation \eqref{eqn: phi function T*U(H) 2} implies that the action $\hat{\gamma}_{\phi}$ associated with the action $\hat{\gamma}$ \cfr{equation \eqref{eqn: BKM-action on plf}} by means of equation \eqref{eqn: normalized action} with $\hat{\delta}_{0}\equiv\hat{\gamma}$ reads
\be
\hat{\gamma}_{\phi}((\mathbf{U},\mathbf{a}),\omega)=\hat{\gamma}\left(\left(\mathbf{U},\frac{\mathbf{a}}{\kappa}\right),\omega\right),
\ee
so that
\be
\hat{W}_{\mathbf{a} \, \mathbf{0}}^{\phi} = \hat{W}_{\frac{\mathbf{a}}{\kappa} \, \mathbf{0}}=\frac{1}{\kappa}\,\hat{W}_{ \mathbf{a} \, \mathbf{0}}
\ee
\cfr{equation \eqref{eqn: T*U(H) vector fields 3}, equation \eqref{eqn: T*U(H) vector fields 5}, and equation \eqref{eqn: phi function T*U(H) 2}}.
Consequently, the fundamental vector fields $W_{\mathbf{a},\mathbf{0}}^{\phi}$ of the normalized action $\gamma_{\phi}$ associated with $\hat{\gamma}_{\phi}$ by means of equation \eqref{eqn: normalized action} read
\be\label{eqn:definition_W_a T*U(H)}
\begin{split}
W_{\mathbf{a} \, \mathbf{0}}^{\phi}( \rho) & =       \frac{d}{dt}  \frac{ \hat{\gamma}_{\phi} \left( \exp\left( \frac{t}{2} ( \mathbf{a}, \mathbf{0}) \right), i ( \rho) \right)}{ \Tr \left(\hat{\gamma}_{\phi} ( \exp\left( \frac{t}{2} ( \mathbf{a}, \mathbf{0}) \right) , i( \rho)) \right)}\Bigg|_{t = 0} = \\
& = \hat{W}_{ \mathbf{ a} \, \mathbf{ 0}}^{\phi}(i(\rho)) -  \Tr\left(\hat{W}_{ \mathbf{ a} \, \mathbf{ 0}}^{\phi}(i(\rho)) \right)\,\Delta(i(\rho)) = \\
& = \frac{1}{\kappa}\left(\hat{W}_{ \mathbf{ a} \, \mathbf{ 0}}(i(\rho)) -  \Tr\left(\hat{W}_{ \mathbf{ a} \, \mathbf{ 0}} (i(\rho)) \right)\,\Delta(i(\rho))\right).
\end{split}
\end{equation}
Equation \eqref{eqn:definition_W_a T*U(H)} is equivalent to
\begin{equation}
T_{\rho}i\left(W_{\mathbf{a} \, \mathbf{0}}^{\phi}( \rho)\right) =  \frac{1}{\kappa}\left(\hat{W}_{ \mathbf{ a} \, \mathbf{ 0}} (i(\rho)) -  \Tr\left(\hat{W}_{ \mathbf{ a} \, \mathbf{ 0}} (i(\rho)) \right)\,\Delta(i(\rho)) \right) 
\end{equation}
for all $\mathbf{a}\in \bhsa$ and all $\rho\in\stsph$, and this last instance is  equivalent to the fact that $W_{\mathbf{a} \, \mathbf{0}}^{\phi}$ is $i$-related with the vector field $ \frac{1}{\kappa}\left(\hat{W}_{ \mathbf{ a} \, \mathbf{ 0}}  - (\mathcal{L}_{\hat{W}_{ \mathbf{ a} \, \mathbf{ 0}}} \hat{l}_{\mathbb{I}})\Delta\right)$ for all $\mathbf{a}\in \bhsa$.

Now, proceeding in complete analogy with what is done in subsection \ref{subsec: Gl(H)}, it is possible to prove that, when $\phi$ and $f$ are as in equation \eqref{eqn: phi function T*U(H) 2} and equation \eqref{eqn: function f 2 T*U(H)} respectively, then the fundamental vector field  $W_{\mathbf{a} \, \mathbf{0}}^{\phi}$ is the gradient vector field associated with the expectation value function $l_{\mathbf{a}}$ by means of the monotone quantum metric tensor $G_{f}$ (coinciding with the Bogoliubov-Kubo-Mori metric tensor up to the constant $\kappa$) for all $\mathbf{a}\in\bhsa$.
Collecting the results in this subsection we obtain the following proposition.

\begin{proposition}\label{prop: T*U(H) actions and monotone quantum metric tensors}
Given the operator monotone function 
\be
f(x)=\kappa \,\frac{x -1 }{ \ln\left(x\right)}
\ee
satisfying equation \eqref{eqn: monotone function} and associated with the Bogoliubov-Kubo-Mori metric tensor $G_{f}\equiv \kappa G_{BKM}$ (up to the constant factor $\kappa>0$)  through equation \eqref{eqn:Petz_characterization} \cite{Petz-1996},  denoting with $\{X_{\mathbf{b}}\}_{\mathbf{b}\in\mathcal{B}_{sa}(\hh)}$  the fundamental vector fields of the canonical action $\alpha$ of $\Uh$ on $\stsph$ as in equation \eqref{eqn:definition_alpha_stsp},    and denoting with   $V^{f}_{\mathbf{a}}$  the gradient vector field associated with the expectation value function $l_{\mathbf{a}}$ with $\mathbf{a}\in\mathcal{B}_{sa} (\hh)$ by means of $G_{f}$, the family $\{V^{f}_{\mathbf{a}},X_{\mathbf{b}}\}_{\mathbf{a},\mathbf{b}\in\mathcal{B}_{sa} (\hh)}$ of vector fields on $\stsph$  close an anti-representation of the Lie algebra of the cotangent group $T^{*}\Uh$  integrating to the group action 
\begin{equation} \label{eqn:family of actions of T*U(H) on states 2}
\gamma^{\kappa}\left((\mathbf{U},\mathbf{a}),\rho\right):=\frac{ \mathrm{e}^{U \ln(\rho) U^\dagger + \frac{\mathbf{a}}{\kappa}}}{\Tr( \mathrm{e}^{U\ln(\rho) U^\dagger + \frac{\mathbf{a}}{\kappa}})}
\end{equation}
The action $\gamma^{\kappa}$ in equation \eqref{eqn:family of actions of T*U(H) on states 2} is transitive on $\stsph$ for every $ \kappa>0$.
 
\end{proposition}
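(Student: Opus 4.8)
The plan is to show that the family $\{V^{f}_{\mathbf a},X_{\mathbf b}\}$ is exactly the full set of fundamental vector fields of a normalized deformed action $\gamma_{\phi}$ built from $\hat\gamma$ via \eqref{eqn: normalized action}, the deformation $\phi$ and the operator monotone function $f$ being pinned down by the requirement that the $\hat l_{\mathbf a}$-gradients coincide with the fundamental vector fields of the deformed action $\hat\gamma_{\phi}$ on $\posh$. Once this identification is secured, the Lie-algebraic closure and the integrability come for free, since the fundamental vector fields of any group action close an anti-representation of its Lie algebra; it then only remains to put $\gamma_{\phi}$ in closed form and to check transitivity. As a preliminary I would record the function-theoretic input: $f(x)=\kappa\,(x-1)/\ln x$ is, up to the positive constant $\kappa$, the Bogoliubov--Kubo--Mori operator monotone function \cite{Petz-1996}, it satisfies $f(t)=t\,f(t^{-1})$, and consequently $\hat G_{f}$ as in \eqref{eqn:definition_monotone_metric_cone} is a well-defined Riemannian metric on $\posh$ which restricts along $i$ to $G_{f}=\kappa\,G_{BKM}$ on $\stsph$, by \eqref{eqn:metric_cone_restricts_to_metric_stsp}.

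The core step is the comparison already begun above between the explicit form \eqref{eqn: T*U(H) vector fields 5} of the fundamental vector field $\hat W^{\phi}_{\mathbf a\,\mathbf 0}$ of $\hat\gamma_{\phi}$ and the expression \eqref{eqn: gradient vector field T*U(H) 4} for the $\hat G_{f}$-gradient of $\hat l_{\mathbf a}$. Matching the two entrywise --- the diagonal block forcing \eqref{eqn: phi function T*U(H)}, hence $\phi(x)=c\,x^{\kappa}$ with $c>0$, and the off-diagonal block then forcing \eqref{eqn: f function T*U(H)}, hence $f(x)=\kappa\,(x-1)/\ln x$ --- shows that for precisely this $\phi$ and this $f$ the vector field $\hat W^{\phi}_{\mathbf a\,\mathbf 0}$ \emph{is} the $\hat G_{f}$-gradient of $\hat l_{\mathbf a}$ on $\posh$. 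I would then transfer this down to $\stsph$ exactly as in the chain \eqref{eqn: relation between gradient vector fields of hat(delta) and delta}--\eqref{eqn: relation between gradient vector fields of hat(delta) and delta 5} of subsection~\ref{subsec: Gl(H)}: by \eqref{eqn:definition_W_a T*U(H)} the vector field $W^{\phi}_{\mathbf a\,\mathbf 0}$ is $i$-related with $\kappa^{-1}\big(\hat W_{\mathbf a\,\mathbf 0}-(\mathcal L_{\hat W_{\mathbf a\,\mathbf 0}}\hat l_{\mathbb I})\,\Delta\big)$, and for any vector field $V$ on $\stsph$ the $\Delta$-term contributes, through $\hat G_{f}$, a multiple of $(\hat G_{f})_{i(\rho)}(\Delta,T_{\rho}i(V))=\Tr(T_{\rho}i(V(\rho)))$ by \eqref{eqn: K superoperator on Delta}, which vanishes because tangent vectors to $\stsph$ are traceless, by \eqref{eqn: tangent space at a state}. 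Since $i^{*}\hat G_{f}=G_{f}$, this gives $(G_{f})_{\rho}(W^{\phi}_{\mathbf a\,\mathbf 0},V)=(\mathcal L_{V}l_{\mathbf a})(\rho)$, that is, $V^{f}_{\mathbf a}=W^{\phi}_{\mathbf a\,\mathbf 0}$; and since the unitary part of $\gamma_{\phi}$ is unaffected by the deformation, which commutes with conjugations, one also has $X_{\mathbf b}=W^{\phi}_{\mathbf 0\,\mathbf b}$.

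Thus $\{V^{f}_{\mathbf a},X_{\mathbf b}\}_{\mathbf a,\mathbf b}$ is the complete set of fundamental vector fields of the smooth transitive $T^{*}\Uh$-action $\gamma_{\phi}$, hence it closes an anti-representation of the Lie algebra of $T^{*}\Uh$ integrating to $\gamma_{\phi}$. To identify $\gamma_{\phi}$ with $\gamma^{\kappa}$, I would substitute $\phi(x)=c\,x^{\kappa}$ into $\hat\gamma_{\phi}=\phi^{-1}\circ\hat\gamma\circ(\mathrm{Id}\times\phi)$; the constant $c$ and the logarithm/exponential collapse, leaving only the rescaling $\mathbf a\mapsto\mathbf a/\kappa$, so that $\hat\gamma_{\phi}((\mathbf U,\mathbf a),\omega)=\mathrm e^{U\ln\omega\,U^{\dagger}+\mathbf a/\kappa}$ and, after normalization via \eqref{eqn: normalized action}, $\gamma_{\phi}=\gamma^{\kappa}$ as in \eqref{eqn:family of actions of T*U(H) on states 2}. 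Transitivity of $\gamma^{\kappa}$ is then immediate: given $\rho_{1},\rho_{2}\in\stsph$, the choice $\mathbf U=\mathbb I$, $\mathbf a=\kappa(\ln\rho_{2}-\ln\rho_{1})\in\bhsa$ gives $\gamma^{\kappa}((\mathbb I,\mathbf a),\rho_{1})=\rho_{2}$, for every $\kappa>0$.

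I expect the main obstacle to be the first half of the second paragraph: checking that the diagonal and off-diagonal matchings of \eqref{eqn: T*U(H) vector fields 5} against \eqref{eqn: gradient vector field T*U(H) 4} are simultaneously consistent. It is this simultaneous consistency that singles out the Bogoliubov--Kubo--Mori function and --- in contrast with the $\Glh$ situation of subsection~\ref{subsec: Gl(H)}, where operator monotonicity imposes the genuine restriction $0<\kappa\leq 1$ --- leaves $\kappa$ entirely unconstrained here; everything else is bookkeeping already carried out for $\Glh$.
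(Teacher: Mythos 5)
Your proposal is correct and follows essentially the same route as the paper: matching the fundamental vector fields of the deformed action $\hat{\gamma}_{\phi}$ on $\posh$ against the $\hat{G}_{f}$-gradients of $\hat{l}_{\mathbf{a}}$ to force $\phi(x)=c\,x^{\kappa}$ and the Bogoliubov--Kubo--Mori function, then projecting to $\stsph$ via $i$-relatedness and the vanishing of the $\Delta$-term. You merely make explicit two steps the paper leaves as "complete analogy" or direct computation (the identification $\gamma_{\phi}=\gamma^{\kappa}$ and the transitivity check), and both are done correctly.
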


\section{Conclusions}\label{sec: conclusions} 

There are several ways in which the results presented here can be further developed in order to fully understand how the 2-dimensional picture discussed in \cite{C-DN-2021} extends to arbitrary finite dimensions.

First of all, concerning the Lie group $\Glh$, it is necessary to understand if there exist smooth transitive actions on $\posh$ that are not of the form $\beta_{\phi}$ \cfr{equation \eqref{eqn: linear action of GL(H)} and equation \eqref{eqn: deformed action}}.
Then, it is necessary to understand if there exist smooth transitive actions on $\stsph$ that do not arise from smooth actions of $\Glh$ on $\posh$ as in equation \eqref{eqn: normalized action}.
If the answer to both these questions are negative, than it follows that the only actions of $\Glh$ on $\stsph$ whose associated Lie algebra anti-representations can be described in terms of the fundamental vector fields of the standard action of $\Uh$ on $\stsph$ \cfr{equation \eqref{eqn:definition_alpha_stsp} and equation \eqref{eqn:definition_X_and_Y_stsp}} and the gradient vector fields $Y^{f}_{\mathbf{a}}$ associated with the expectation value functions $l_{\mathbf{a}}$ by means of a suitable monotone quantum metric tensor are those found in this work.

Concerning the group $T^{*}\Uh$, it is necessary to understand if there exist smooth transitive actions on $\stsph$ that do not arise from smooth actions of $T^{*}\Uh$ on $\posh$ as in equation \eqref{eqn: normalized action}.
If the answer to this question is negative, than it follows that the only action  of $T^{*}\Uh$ on $\stsph$ whose associated Lie algebra anti-representations can be described in terms of the fundamental vector fields of the standard action of $\Uh$ on $\stsph$ \cfr{equation \eqref{eqn:definition_alpha_stsp} and equation \eqref{eqn:definition_X_and_Y_stsp}} and the gradient vector fields associated with the expectation value functions $l_{\mathbf{a}}$ by means of a suitable monotone quantum metric tensor are the ones found in this work, that is, the one associated with the Bogoliubov-Kubo-Mori metric tensor.

Besides the cases involving the Lie groups $\Glh$ and $T^{*}\Uh$, it is also necessary to understand if, for a quantum system whose Hilbert space $\hh$ has dimension greater than 2, there exists other Lie groups acting smoothly and transitively on $\stsph$ and whose Lie algebra anti-representation  can be described in terms of the fundamental vector fields of the standard action of $\Uh$ on $\stsph$ \cfr{equation \eqref{eqn:definition_alpha_stsp} and equation \eqref{eqn:definition_X_and_Y_stsp}} and the gradient vector fields  associated with the expectation value functions $l_{\mathbf{a}}$ by means of  suitable monotone quantum metric tensors.
Concerning this instance, something can be said on some general properties any such Lie group $\gapp$ must possess.
First of all, the unitary group $\Uh$ must appear as a subgroup of $\gapp$ and $\mathrm{dim}(\gapp)=2\mathrm{dim}(\Uh)$.
This last condition follows from the fact that the gradient vector fields associated with the expectation value functions $l_{\mathbf{a}}$ are labelled by elements in $\bhsa$, and thus the dimension of the Lie algebra $\mathfrak{g}$ of $\gapp$ is twice that of the Lie algebra of $\Uh$.
From this last observation it also follows that 
\be\label{eqn: Lie algebra of gapp}
\mathfrak{g}\cong\uh\oplus\bhsa
\ee 
as a vector space.
Moreover, since $\Uh$ must be a subgroup of $\gapp$, there must be a decomposition of $\mathfrak{g}$ as in equation \eqref{eqn: Lie algebra of gapp} for which $\uh\oplus\{\mathbf{0}\}$ is a Lie subalgebra isomorphic to $\uh$.
Then, as already argued in \cite{Ciaglia-2020}, the requirement that the fundamental vector fields $X_{\mathbf{b}}$ of the standard action $\alpha$ of $\Uh$ on $\stsph$ are Killing vector fields for every monotone quantum metric tensor $G_{f}$ imposes additional constraints on the possible commutator between these vector fields and the gradient vector fields $Y^{f}_{\mathbf{a}}$ associated with the expectation value functions $l_{\mathbf{a}}$.
Specifically, since $Y^{f}_{\mathbf{a}}$ is the gradient vector field associated with the expectation value function  $l_{\mathbf{a}}$ for every $\mathbf{a}\in\bhsa$, it follows that
\be\label{eqn: constraints mixed brackets}
\begin{split}
\mathcal{L}_{[X_{\mathbf{b}},Y^{f}_{\mathbf{a}}]}l_{\mathbf{c}}&=\mathcal{L}_{ X_{\mathbf{b}}}\left(\mathcal{L}_{Y^{f}_{\mathbf{a}}}l_{\mathbf{c}}\right) - \mathcal{L}_{Y^{f}_{\mathbf{a}}}\left(\mathcal{L}_{X_{\mathbf{b}}}l_{\mathbf{c}}\right)= \\
&= \mathcal{L}_{ X_{\mathbf{b}}}\left(G_{f}\left(Y^{f}_{\mathbf{a}},Y^{f}_{\mathbf{c}}\right)\right) - \mathcal{L}_{Y^{f}_{\mathbf{a}}}\left(l_{[\mathbf{b},\mathbf{c}]}\right)= \\
&=  G_{f}\left([ X_{\mathbf{b}},Y^{f}_{\mathbf{a}}],Y^{f}_{\mathbf{c}}\right) + G_{f}\left(Y^{f}_{\mathbf{a}}, [ X_{\mathbf{b}},Y^{f}_{\mathbf{c}}]\right)  - \mathcal{L}_{Y^{f}_{\mathbf{a}}}\left(l_{[\mathbf{b},\mathbf{c}]}\right)= \\
&=  \mathcal{L}_{[X_{\mathbf{b}},Y^{f}_{\mathbf{a}}]}l_{\mathbf{c}} + G_{f}\left(Y^{f}_{\mathbf{a}}, [ X_{\mathbf{b}},Y^{f}_{\mathbf{c}}]\right)  - G_{f}\left(Y^{f}_{\mathbf{a}},Y^{f}_{[\mathbf{b},\mathbf{c}]}\right)  \\
\end{split}
\ee
where we used the equation \eqref{eqn: fundamental vector fields of U(H) on expectation value functions}, and the  fact that $\mathcal{L}_{X_{\mathbf{a}}} G_{f}=0$ because the fundamental vector fields of the action $\alpha$ of $\Uh$ are Killing vector fields for all monotone quantum metric tensors.
From equation \eqref{eqn: constraints mixed brackets} we conclude that
\be\label{eqn: constraints mixed brackets 2}
G_{f}\left(Y^{f}_{\mathbf{a}}, [ X_{\mathbf{b}},Y^{f}_{\mathbf{c}}]-Y^{f}_{[\mathbf{b},\mathbf{c}]}\right)=0
\ee
for every $\mathbf{a},\mathbf{b},\mathbf{c}\in\bhsa$.
Then, since the differential of the expectation value functions provide a basis for the differential forms on $\stsph$, equation \eqref{eqn: constraints mixed brackets 2} is equivalent to
\be\label{eqn: constraints mixed brackets 3}
[ X_{\mathbf{b}},Y^{f}_{\mathbf{c}}]=Y^{f}_{[\mathbf{b},\mathbf{c}]}.
\ee
Equation \eqref{eqn: constraints mixed brackets 3} fixes the Lie bracket between elements of $\uh\oplus\{\mathbf{0}\}$ and its complement, thus leaving us with the freedom to only define the bracket among elements that lies in the complement of $\uh\oplus\{\mathbf{0}\}$ inside the Lie algebra $\mathfrak{g}$ of $\gapp$.

We are currently investigating all the problems discussed in this section and we plan to address them in detail in the (hopefully not too distant) future.

\addcontentsline{toc}{section}{References}
\bibliographystyle{plain}
\bibliography{scientific_bibliography}
\end{document}